\newtheorem{theorem}{Theorem} 
\newtheorem{lemma}{Lemma} 
\newtheorem{prop}{Proposition}
\newtheorem{cor}{Corollary}
\def\E{\mathbb{E}}
\newcommand{\X}{\mathcal{X}}
\newcommand{\C}{\mathcal{C}}
\renewcommand{\S}{\mathcal{S}}
\newcommand{\lhat}{\hat{\lambda}}
\newcommand{\Rhat}{\widehat{R}}
\newcommand{\ind}[1]{\mathbbm{1}\left\{#1\right\}}
\newcommand{\triangleq}{\stackrel{\triangle}{=}}
\renewcommand{\P}{\mathbb{P}}
\newcommand{\rb}{\right)}
\newcommand{\lb}{\left(}
\newcommand{\eps}{\epsilon}
\newcommand{\lhatplus}{\lhat^{\uparrow}}
\newcommand{\Rplus}{R^{\uparrow}}
\newcommand{\Rhatplus}{\Rhat^{\uparrow}}
\newcommand{\Chat}{\widehat{\mathcal{C}}}
\newcommand{\Clam}{\mathcal{C}_{\lambda}}
\newcommand{\Clhat}{\mathcal{C}_{\hat{\lambda}}}
\newcommand{\D}{\mathcal{D}}
\def\blfootnote{\xdef\@thefnmark{}\@footnotetext}
\title{\vspace{-1cm}Conformal Risk Control}
\author{Anastasios N.~Angelopoulos$^1$, Stephen Bates$^1$, Adam Fisch$^2$, Lihua Lei$^3$, and Tal Schuster$^4$}
\date{
$^1$University of California, Berkeley\\
$^2$Massachusetts Institute of Technology\\
$^3$Stanford University\\
$^4$Google Research\\
}
\begin{document}
\maketitle

\begin{abstract}
    We extend conformal prediction to control the expected value of any monotone loss function.
    The algorithm generalizes split conformal prediction together with its coverage guarantee.
    Like conformal prediction, the conformal risk control procedure is tight up to an $\mathcal{O}(1/n)$ factor.
    We also introduce extensions of the idea to distribution shift, quantile risk control, multiple and adversarial risk control, and expectations of U-statistics.
    Worked examples from computer vision and natural language processing demonstrate the usage of our algorithm to bound the false negative rate, graph distance, and token-level F1-score.
\end{abstract}

\section{Introduction}
\label{sec:intro}
We seek to endow some pre-trained machine learning model with guarantees on its performance as to ensure its safe deployment.
Suppose we have a base model $f$ that is a function mapping inputs $x \in \X$ to values in some other space, such as a probability distribution over classes. 
Our job is to design a procedure that takes the output of $f$ and post-processes it into quantities with desirable statistical guarantees.

Split conformal prediction~\citep{vovk2005algorithmic,papadopoulos2002inductive}, which we will henceforth refer to simply as “conformal prediction”, has been useful in areas such as computer vision~\citep{angelopoulos2020sets} and natural language processing~\citep{fisch2020efficient} to provide such a guarantee.
By measuring the model's performance on a \emph{calibration dataset} $\big\{(X_{i},Y_{i})\big\}_{i=1}^{n}$ of feature-response pairs, conformal prediction post-processes the model to construct prediction sets that bound the \emph{miscoverage},\looseness=-1
\begin{equation}
    \label{eq:miscoverage}
    \P\Big( Y_{n+1} \notin \C(X_{n+1}) \Big) \leq \alpha,
\end{equation}
where $(X_{n+1},Y_{n+1})$ is a new test point, $\alpha$ is a user-specified error rate (e.g., 10\%), and $\C$ is a function of the model and calibration data that outputs a prediction set.
Note that $\C$ is formed using the first $n$ data points, and the probability in \eqref{eq:miscoverage} is over the randomness in all $n+1$ data points (i.e., the draw of both the calibration and test points).

In this work, we extend conformal prediction to prediction tasks where the natural notion of error is not simply miscoverage. 
In particular,
our main result is that a generalization of conformal prediction provides guarantees of the form
\begin{equation}
    \label{eq:risk-upper-bound}
    \E\Big[ \ell\big(\C(X_{n+1}), Y_{n+1}\big) \Big] \leq \alpha,
\end{equation}
for any bounded \emph{loss function} $\ell$ that shrinks as $\C(X_{n+1})$ grows.
We call this a \emph{conformal risk control} guarantee.
Note that~\eqref{eq:risk-upper-bound} recovers the conformal miscoverage guarantee in~\eqref{eq:miscoverage} when using the miscoverage loss, $\ell\big(\C(X_{n+1}), Y_{n+1}) = \ind{Y_{n+1} \notin \C(X_{n+1})}$.
However, our algorithm also extends conformal prediction to situations where other loss functions, such as the false negative rate (FNR) or F1-score, are more appropriate.

As an example, consider multilabel classification, where the $Y_i \subseteq \{1,...,K\}$ are sets comprising a subset of $K$ classes.
Given a trained multilabel classifier $f : \X \to [0,1]^K$, we want to output sets that include a large fraction of the true classes in $Y_i$. 
To that end, we post-process the model's raw outputs into the set of classes with sufficiently high scores, $\Clam(x) = \{ k : f(X)_k \geq 1- \lambda \}$.
Note that as the threshold $\lambda$ grows, we include more classes in $\Clam(x)$---i.e., it becomes more conservative.
In this case, conformal risk control finds a threshold value $\lhat$ that controls the fraction of missed classes, i.e., the expected value of $\ell\big( \Clhat(X_{n+1}), Y_{n+1} \big) = 1 - |Y_{n+1} \cap \Clam(X_{n+1})|/ |Y_{n+1}|$.
Setting $\alpha=0.1$ would ensure that our algorithm produces sets $\Clhat(X_{n+1})$ containing $\geq90\%$ of the true classes in $Y_{n+1}$ on average.\looseness=-1

\subsection{Algorithm and preview of main results}
Formally, we will consider post-processing the predictions of the model $f$ to create a function $\Clam(\cdot)$. 
The function has a parameter $\lambda$ that
encodes its level of conservativeness: larger $\lambda$ values yield more conservative outputs (e.g., larger prediction sets). 
To measure the quality of the output of $\Clam$, we consider a loss function $\ell(\Clam(x), y) \in (-\infty, B]$ for some $B<\infty$.
We require the loss function to be non-increasing as a function of $\lambda$.
Our goal is to choose $\lhat$ based on the observed data $\big\{(X_{i},Y_{i})\big\}_{i=1}^{n}$ so that risk control as in~\eqref{eq:risk-upper-bound} holds.

We now rewrite this same task in a more notationally convenient and abstract form.
Consider an exchangeable collection of non-increasing, random functions $L_i : \Lambda \to (-\infty, B]$, $i=1,\dots,n+1$. Throughout the paper, we assume $\lambda_{\max}\triangleq \sup \Lambda \in \Lambda$. We seek to use the first $n$ functions to choose a value of the parameter, $\lhat$, in such a way that the risk on the unseen function is controlled:
\begin{equation}
  \label{eq:intro-risk-control}
  \E\Big[L_{n+1}\big(\lhat \big)\Big] \leq \alpha.
\end{equation}
We are primarily motivated by the case where $L_i(\lambda) = \ell(\Clam(X_i), Y_i)$, in which case the guarantee in~\eqref{eq:intro-risk-control} coincides with risk control as in~\eqref{eq:risk-upper-bound}.

Now we describe the algorithm.
Let $\Rhat_{n}(\lambda) = (L_1(\lambda) + \ldots + L_n(\lambda))/n$.
Given any desired risk level upper bound $\alpha \in (-\infty, B)$, define \looseness=-1
\begin{equation}
\label{eq:lhat}
\lhat = \inf\left\{ \lambda : \frac{n}{n+1} \Rhat_{n}(\lambda)  + \frac{B}{n+1} \leq \alpha \right\}.
\end{equation}
When the set is empty, we define $\hat{\lambda} = \lambda_{\max}$.
Our proposed \emph{conformal risk control} algorithm is to deploy $\lhat$ on the forthcoming test point. 
Our main result is that this algorithm satisfies~\eqref{eq:intro-risk-control}.
When the $L_i$ are i.i.d.\ from a continuous distribution, the algorithm satisfies a lower bound saying it is not too conservative,
\begin{equation}
  \E\Big[ L_{n+1}\big(\lhat \big) \Big] \geq \alpha - \frac{2B}{n+1}. 
\end{equation}
We show the reduction from conformal risk control to conformal prediction in Section~\ref{sec:cp_is_crc}.
Furthermore, if the risk is non-monotone, then this algorithm does not control the risk; we discuss this in Section~\ref{sec:counterexample}. 
Finally, we provide both practical examples using real-world data and several theoretical extensions of our procedure in Sections~\ref{sec:examples} and \ref{sec:extensions}, respectively.

\subsection{Related work}
Conformal prediction was developed by Vladimir Vovk and collaborators beginning in the late 1990s~\citep{vovk1999machine, vovk2005algorithmic}, and has recently become a popular uncertainty estimation tool in the machine learning community, due to its favorable model-agnostic, distribution-free, finite-sample guarantees.
See~\cite{angelopoulos-gentle} for a modern introduction to the area or~\cite{shafer2008tutorial} for a more classical alternative.
As previously discussed, in this paper we primarily build on \emph{split conformal prediction}~\citep{papadopoulos2002inductive}; statistical properties of this algorithm including the coverage upper bound were studied in~\cite{lei2018distribution}.
Recently there have been many extensions of the conformal algorithm, mainly targeting deviations from exchangeability~\citep{tibshirani2019conformal,gibbs2021adaptive,barber2022conformal,fannjiang2022conformal} and improved conditional coverage~\citep{barber2019limits,romano2019conformalized,guan2020conformal,romano2020classification,angelopoulos2020sets}. 
Most relevant to us is recent work on risk control in high probability~\citep{vovk2012conditional, bates2021distribution,angelopoulos2021learn} and its applications~\citep[][\emph{inter alia}]{Park2020PAC,fisch2022conformal, schuster2021consistent, schuster2022confident, sankaranarayanan2022semantic, angelopoulos2022image, angelopoulos2022recommendation}.
Though these works closely relate to ours in terms of motivation, the algorithm presented herein differs greatly: it has a guarantee in expectation, and neither the algorithm nor its analysis share much technical similarity with these previous works.

To elaborate on the difference between our work and previous literature, first consider conformal prediction.
The purpose of conformal prediction is to provide coverage guarantees of the form in~\eqref{eq:miscoverage}.
The guarantee available through conformal risk control,~\eqref{eq:intro-risk-control}, strictly subsumes that of conformal prediction; it is generally impossible to recast risk control as coverage control.
As a second question, one might ask whether~\eqref{eq:intro-risk-control} can be achieved through standard statistical machinery, such as uniform concentration inequalities.
Though it is possible to integrate a uniform concentration inequality to get a bound in expectation, this strategy tends to be excessively loose both in theory and in practice (see, e.g., the bound of ~\cite{anthony1993result}).
The technique herein avoids these complications; it is simpler than concentration-based approaches, practical to implement, and tight up to a factor of $\mathcal{O}(1/n)$, which is comparatively faster than concentration would allow.
Finally, herein we target distribution-free finite-sample control of~\eqref{eq:intro-risk-control}, but as a side-note it is also worth pointing the reader to the rich literature on functional central limit theorems~\citep{davidson2000functional}, which are another way of estimating risk functions.

\section{Theory}
\label{sec:theory}

In this section, we establish the core theoretical properties of conformal risk control. All proofs, unless otherwise specified, are deferred to Appendix~\ref{app:proofs}. 

\subsection{Risk control}
\label{sec:monotone}

We first show that the proposed algorithm leads to risk control when the loss is monotone. 
\begin{theorem}\label{thm:upper-bound}
  Assume that $L_{i}(\lambda)$ is non-increasing in $\lambda$, right-continuous, and
  \begin{equation}
    \label{eq:gF}
    L_{i}(\lambda_{\max}) \le \alpha, \quad \sup_{\lambda}L_{i}(\lambda) \le B < \infty \text{ almost surely}. 
  \end{equation}
  Then
  \begin{equation}
      \E[L_{n+1}(\lhat)] \leq \alpha.
  \end{equation}
\end{theorem}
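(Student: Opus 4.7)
The plan is to mimic the classical split conformal argument by comparing $\lhat$, which uses only $L_1,\ldots,L_n$, to an oracle threshold $\ldot$ that also ``peeks'' at $L_{n+1}$. Concretely, set $\Rdotplus(\lambda) = \frac{1}{n+1}\sum_{i=1}^{n+1} L_i(\lambda)$ and $\ldot = \inf\{\lambda : \Rdotplus(\lambda) \leq \alpha\}$. Because $L_{n+1}(\lambda) \leq B$ almost surely,
\begin{equation}
\Rdotplus(\lambda) \;=\; \frac{n}{n+1}\Rhat_n(\lambda) + \frac{L_{n+1}(\lambda)}{n+1} \;\leq\; \frac{n}{n+1}\Rhat_n(\lambda) + \frac{B}{n+1},
\end{equation}
so every $\lambda$ admissible in the definition of $\lhat$ is admissible for $\ldot$, giving $\lhat \geq \ldot$. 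Non-increasingness of $L_{n+1}$ then yields $L_{n+1}(\lhat) \leq L_{n+1}(\ldot)$, hence $\E[L_{n+1}(\lhat)] \leq \E[L_{n+1}(\ldot)]$.

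Next I would invoke exchangeability. Since $\ldot$ is a symmetric function of $(L_1,\ldots,L_{n+1})$, the joint law of $(L_i,\ldot)$ does not depend on $i$, so $\E[L_i(\ldot)]$ is constant in $i$ and
\begin{equation}
\E\big[L_{n+1}(\ldot)\big] \;=\; \frac{1}{n+1}\sum_{i=1}^{n+1}\E\big[L_i(\ldot)\big] \;=\; \E\big[\Rdotplus(\ldot)\big].
\end{equation}
To close the argument I would show $\Rdotplus(\ldot) \leq \alpha$ almost surely. By non-increasingness, $\inf_\lambda \Rdotplus(\lambda) = \frac{1}{n+1}\sum_{i=1}^{n+1}\inf_\lambda L_i(\lambda) \leq \alpha - B/(n+1) < \alpha$, so the set $\{\lambda : \Rdotplus(\lambda) \leq \alpha\}$ is non-empty; right-continuity of each $L_i$ is inherited by $\Rdotplus$, which lets me take the right limit at $\ldot$ and conclude $\Rdotplus(\ldot) \leq \alpha$. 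Chaining the inequalities gives $\E[L_{n+1}(\lhat)] \leq \alpha$.

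The main subtlety I anticipate is justifying the exchangeability step cleanly: $\ldot$ is a random quantity built from all $n+1$ functions through a nonlinear infimum operation, and one must verify that for every $i$ the permutation swapping indices $i \leftrightarrow n+1$ preserves the joint law of $(L_i,\ldot)$, so that the marginal expectations $\E[L_i(\ldot)]$ truly coincide. The remaining ingredients are essentially bookkeeping: the additive term $B/(n+1)$ in the definition of $\lhat$ is exactly what substitutes for the unknown contribution $L_{n+1}(\lambda)/(n+1)$, and right-continuity is exactly what upgrades ``infimum of a sub-level set'' to ``achieves the sub-level constraint.''
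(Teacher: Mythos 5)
Your proposal is correct and follows essentially the same route as the paper's proof: the oracle threshold $\ldot$ is exactly the paper's $\lhat'$ built from $\Rhat_{n+1}$, and the chain (admissibility comparison $\Rightarrow$ $\ldot \le \lhat$, monotonicity, exchangeability of the symmetric statistic, right-continuity to get $\Rhat_{n+1}(\ldot)\le\alpha$) matches the paper step for step. The only cosmetic difference is that you phrase the exchangeability step via symmetry of $\ldot$ and equality of the marginals $\E[L_i(\ldot)]$, whereas the paper conditions on the multiset of loss functions; these are equivalent.
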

\begin{proof}
  Let $\Rhat_{n+1}(\lambda) = (L_1(\lambda) + \ldots + L_{n+1}(\lambda))/ (n+1)$ and 
  \begin{equation}
    \lhat'= \inf\left\{ \lambda\in\Lambda : \Rhat_{n+1}(\lambda) \leq \alpha \right\}.
  \end{equation}
  Since $\inf_{\lambda} L_{i}(\lambda) = L_i(\lambda_{\max}) \leq \alpha$, $\lhat'$ is well-defined almost surely. 
  Since $L_{n+1}(\lambda) \leq B$, we know $\Rhat_{n+1}(\lambda) = \frac{n}{n+1}\Rhat_{n}(\lambda) + \frac{L_{n+1}(\lambda)}{n+1} \leq \frac{n}{n+1}\Rhat_{n}(\lambda) + \frac{B}{n+1}$.
  Thus,
  \begin{equation}
    \frac{n}{n+1} \Rhat_{n}(\lambda)  + \frac{B}{n+1} \leq \alpha \Longrightarrow \Rhat_{n+1}(\lambda) \leq \alpha.
  \end{equation}
  This implies $\lhat'\le \lhat$ when the LHS holds for some $\lambda\in \Lambda$. When the LHS is above $\alpha$ for all $\lambda\in \Lambda$, by definition, $\lhat = \lambda_{\max}\ge \lhat'$. Thus, $\lhat' \le \lhat$ almost surely. Since $L_{i}(\lambda)$ is non-increasing in $\lambda$,
  \begin{equation}
    \label{eq:lhat-lhat'}
    \E\Big[L_{n+1}\big(\lhat\big)\Big]\le \E\Big[L_{n+1}\big(\lhat'\big)\Big].
  \end{equation}
 Let $E$ be the multiset of loss functions $\{ L_1, \ldots, L_{n+1} \}$. Then $\lhat'$ is a function of $E$, or, equivalently, $\lhat'$ is a constant conditional on $E$. 
 Additionally, $L_{n+1}(\lambda)|E \sim \mathrm{Uniform}(\{L_1, ..., L_{n+1}\}$) by exchangeability.
 These facts combined with the right-continuity of $L_{i}$ imply
 \begin{equation}
    \E\left[L_{n+1}(\lhat')\mid E\right] = \frac{1}{n+1}\sum_{i=1}^{n+1}L_{i}(\lhat') \leq \alpha.
 \end{equation}
 The proof is completed by the law of total expectation and \eqref{eq:lhat-lhat'}.
\end{proof}

\subsection{A risk lower bound}
Next we show that the conformal risk control procedure is tight up to a factor $2B/(n+1)$.
Like the standard conformal coverage upper bound, the proof will rely on a form of continuity that prohibits large jumps in the risk function.
Towards that end, we will define the \emph{jump function} below, which quantifies the size of the discontinuity in a right-continuous input function $l$ at point $\lambda$:
\begin{equation}
    J(l,\lambda) = \underset{\epsilon \to 0^+}{\lim} l(\lambda - \epsilon) - l(\lambda)
\end{equation}
The jump function measures the size of a discontinuity at $l(\lambda)$.
When there is a discontinuity and $l$ is non-increasing, $J(l,\lambda) > 0$. 
When there is no discontinuity, the jump function is zero.
The next theorem will assume that the probability that $L_i$ has a discontinuity at any pre-specified $\lambda$ is $\mathbb{P}(J(L_i, \lambda) > 0) = 0$.
Under this assumption the conformal risk control procedure is not too conservative.
\begin{theorem}
  \label{thm:lower-bound}
  In the setting of Theorem~\ref{thm:upper-bound}, further assume that the $L_i$ are i.i.d., $L_i\geq 0$, 
  and for any $\lambda$, $\P\left(J(L_i, \lambda) > 0 \right) = 0$.
  Then 
  \begin{equation*}
    \E\Big[L_{n+1}\big(\lhat\big)\Big] \geq \alpha - \frac{2B}{n+1}.
  \end{equation*}
\end{theorem}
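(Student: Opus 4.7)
The plan is to compare $\lhat$ with an oracle threshold $\td\lhat = \inf\{\lambda : \Rhat_{n+1}(\lambda) \leq \alpha - B/(n+1)\}$ computed symmetrically from all $n+1$ losses, and to proceed in three steps. First, because $L_{n+1} \geq 0$, one has $\frac{n}{n+1}\Rhat_{n}(\lambda) \leq \Rhat_{n+1}(\lambda)$, so $\Rhat_{n+1}(\lambda) \leq \alpha - B/(n+1)$ implies $\frac{n}{n+1}\Rhat_{n}(\lambda) + B/(n+1) \leq \alpha$, yielding $\lhat \leq \td\lhat$. Non-increasingness of $L_{n+1}$ then gives $\E[L_{n+1}(\lhat)] \geq \E[L_{n+1}(\td\lhat)]$, and since $\td\lhat$ is a symmetric function of the multiset $E = \{L_1,\dots,L_{n+1}\}$, the same conditioning argument used in the proof of Theorem~\ref{thm:upper-bound} yields $\E[L_{n+1}(\td\lhat) \mid E] = \Rhat_{n+1}(\td\lhat)$. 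It therefore remains to lower bound $\E[\Rhat_{n+1}(\td\lhat)]$.

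Next, I would pin down $\Rhat_{n+1}(\td\lhat)$ via the infimum definition. By monotonicity of $\Rhat_{n+1}$, the left limit satisfies $\lim_{\epsilon\to 0^+} \Rhat_{n+1}(\td\lhat - \epsilon) \geq \alpha - B/(n+1)$, and the drop at $\td\lhat$ equals $\frac{1}{n+1}\sum_{i=1}^{n+1} J(L_i, \td\lhat)$. Thus it is enough to establish the almost-sure bound $\sum_i J(L_i,\td\lhat) \leq B$, since that gives $\Rhat_{n+1}(\td\lhat) \geq \alpha - 2B/(n+1)$ and the theorem follows by combining with the previous step.

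The main obstacle is exactly this last bound, because \emph{a priori} all $n+1$ of the $L_i$ could jump simultaneously at the random point $\td\lhat$. The key is that the continuity hypothesis, combined with the i.i.d.\ assumption, forces the discontinuity sets $D_i = \{\lambda : J(L_i,\lambda) > 0\}$ to be almost surely pairwise disjoint. Each $D_i$ is a.s.\ countable (as the jump set of a bounded monotone function), and for every fixed $\lambda$ the hypothesis gives $\P(\lambda \in D_i) = 0$. Conditioning on $D_j$ and using the independence of $L_i$ and $L_j$, one obtains $\P(D_i \cap D_j \neq \emptyset \mid D_j) \leq \sum_{\lambda \in D_j} \P(\lambda \in D_i) = 0$; a union bound over the $\binom{n+1}{2}$ pairs shows no two $L_i$ share a jump point almost surely. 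Consequently, at $\td\lhat$ at most one $J(L_i,\td\lhat)$ is nonzero, and since $0 \leq L_i \leq B$ any single jump is bounded by $B$, completing the plan.
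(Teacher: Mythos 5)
Your proposal is correct and follows essentially the same route as the paper: the same auxiliary threshold (your $\td\lhat$ is the paper's $\lhat''$), the same comparison $\lhat \le \td\lhat$ via $L_{n+1}\ge 0$, the same exchangeability step reducing to $\E[\Rhat_{n+1}(\td\lhat)]$, and the same left-limit-minus-jump bound. Your disjointness argument for the discontinuity sets is precisely the paper's Jump Lemma, just inlined.
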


\subsection{Conformal prediction reduces to risk control}
\label{sec:cp_is_crc}
Conformal prediction can be thought of as controlling the expectation of an indicator loss function. 
Recall that the risk upper bound~\eqref{eq:risk-upper-bound} specializes to the conformal coverage guarantee in~\eqref{eq:miscoverage} when the loss function is the indicator of a miscoverage event. 
The conformal risk control procedure specializes to conformal prediction under this loss function as well. 
However, the risk lower bound in Theorem~\ref{thm:lower-bound} has a slightly worse constant than the usual conformal guarantee.
We now describe these correspondences.

First, we show the equivalence of the algorithms.
In conformal prediction, we have conformal scores $s(X_i,Y_i)$ for some score function $s : \mathcal{X} \times \mathcal{Y} \to \mathbb{R}$. 
Based on this score function, we create prediction sets for the test point $X_{n+1}$ as
\begin{equation*}
\Clhat(X_{n+1}) = \big\{y : s(X_{n+1},y) \leq \lhat\big\},
\end{equation*}
where $\lhat$ is the conformal quantile, a parameter that is set based on the calibration data. 
In particular, conformal prediction chooses $\lhat$ to be the $\lceil(n+1)(1-\alpha)\rceil/n$ sample quantile of $\{s(X_i,Y_i)\}_{i=1}^n$. 
To formulate this in the language of risk control, we consider a \emph{miscoverage loss} $L^{\rm Cvg}_i(\lambda) = \ind{Y_i \notin \Chat_\lambda(X_i)} = \ind{ s(X_i,Y_i) > \lambda }$.
Direct calculation of $\lhat$ from~\eqref{eq:lhat} then shows the equivalence of the proposed procedure to conformal prediction:
\begin{multline}
    \lhat = \inf\left\{\lambda : \frac{1}{n+1}\sum\limits_{i=1}^n\ind{ s(X_i,Y_i) > \lambda } + \frac{1}{n+1} \leq \alpha \right\} = \\ \underbrace{\inf\left\{\lambda : \frac{1}{n}\sum\limits_{i=1}^n\ind{ s(X_i,Y_i) \leq \lambda } \geq \frac{\lceil(n+1)(1-\alpha)\rceil}{n} \right\}}_{\rm conformal\ prediction\ algorithm}.
\end{multline}

Next, we discuss how the risk lower bound relates to its conformal prediction equivalent.
In the setting of conformal prediction,~\cite{lei2018distribution} proves that $\P( Y_{n+1} \notin \Clhat(X_{n+1})) \geq \alpha-1/(n+1)$ when the conformal score function follows a continuous distribution.
Theorem~\ref{thm:lower-bound} recovers this guarantee with a slightly worse constant: $\P( Y_{n+1} \notin \Clhat(X_{n+1})) \geq \alpha-2/(n+1)$.
First, note that our assumption in Theorem~\ref{thm:lower-bound} about the distribution of discontinuities specializes to the continuity of the score function when the miscoverage loss is used:\looseness=-1
\begin{equation}
    \mathbb{P}\left(J\Big(L^{\rm Cvg}_i, \lambda\Big) > 0\right) = 0 \Longleftrightarrow \mathbb{P}(s(X_i,Y_i) = \lambda) = 0.
\end{equation}
However, the bound for the conformal case is better than the bound for the general case in Theorem~\ref{thm:lower-bound} by a factor of two.
We do not know whether this factor of $2$ is improvable.
However, this factor is of little practical importance, as the difference between $1 / (n+1)$ and $2 / (n+1)$ is small even for moderate values of $n$.

\subsection{Controlling general loss functions}
\label{sec:counterexample}
We next show that the conformal risk control algorithm  does \emph{not} control the risk if the $L_i$ are not assumed to be monotone. 
In particular,~\eqref{eq:intro-risk-control} does not hold.
We show this by example.\looseness=-1
\begin{prop}
    \label{prop:counterexample}
    For any $\epsilon$, there exists a non-monotone loss function such that
    \begin{equation}
        \E\left[ L_{n+1}\big(\lhat\big) \right] \geq B-\epsilon.
    \end{equation}
\end{prop}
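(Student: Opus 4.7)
The plan is to exhibit a random, non-monotone, right-continuous loss for which the algorithm's selected $\lhat$ lands outside the ``dip'' of the test function $L_{n+1}$ with probability close to one. Let $a_1,\dots,a_{n+1}$ be i.i.d.\ uniform on $[0,1]$ and, for a small $\delta>0$ to be fixed at the end of the argument, define
\begin{equation*}
L_i(\lambda) \;=\; 1 - \ind{a_i \le \lambda < a_i + \delta}.
\end{equation*}
Each $L_i$ is exchangeable, right-continuous, bounded by $B=1$, has infimum $0$, and is non-monotone, taking values $1,0,1$ in order. I take $\alpha = n/(n+1)$, so that the admissibility condition from Theorem~\ref{thm:upper-bound} is satisfied and the condition $\frac{n}{n+1}\Rhat_n(\lambda) + \frac{1}{n+1}\le\alpha$ from~\eqref{eq:lhat} simplifies to $\Rhat_n(\lambda)\le 1-\tfrac{1}{n}$, i.e., to the requirement that at least one of the intervals $[a_i,a_i+\delta)$ contain $\lambda$.

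The next step is to read off $\lhat$: since the feasible set is $\bigcup_{i\le n}[a_i,a_i+\delta)$ and it attains its infimum at $\min_{i\le n}a_i$, we get $\lhat = \min_{i\le n}a_i$ almost surely. Then I analyze $L_{n+1}(\lhat)$ by conditioning on whether $a_{n+1}$ is the overall minimum of $a_1,\dots,a_{n+1}$. By exchangeability this event has probability $1/(n+1)$; on its complement $\lhat = \min_{i\le n}a_i < a_{n+1}$, so $\lhat\notin[a_{n+1},a_{n+1}+\delta)$ and $L_{n+1}(\lhat)=1$. When $a_{n+1}$ is the overall minimum, $L_{n+1}(\lhat)=0$ only if the gap $\min_{i\le n}a_i - a_{n+1}$ falls below $\delta$; conditional on $a_{n+1}=u$, this gap is the minimum of $n$ i.i.d.\ uniforms on $[u,1]$ shifted by $-u$, whose distribution is continuous, so the probability of its being below $\delta$ tends to $0$ as $\delta\to 0$.

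Combining the two cases yields
\begin{equation*}
\E[L_{n+1}(\lhat)] \;\ge\; 1 - \P\bigl(a_{n+1}\text{ is the overall minimum and }\min_{i\le n}a_i - a_{n+1} < \delta\bigr),
\end{equation*}
and choosing $\delta$ small enough drives the right-hand side above $1-\epsilon$ for any prescribed $\epsilon>0$ and any fixed $n$. The only step requiring any care is the routine continuity computation bounding the conditional gap probability by a constant times $\delta$; no real technical obstacle arises. The conceptual point is that, without monotonicity, the algorithm has no way to distinguish a drop in $\Rhat_n$ caused by a dip shared with $L_{n+1}$ from one caused by ``training-specific'' dips that misalign with the test function.
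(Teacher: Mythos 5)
Your construction is correct and genuinely different from the paper's. You use a continuum-indexed loss with a single random ``dip'' interval $[a_i,a_i+\delta)$ per function, identify $\lhat=\min_{i\le n}a_i$ exactly, and win by noting that the test function's dip almost never sits at the minimum of the calibration dips; the paper instead discretizes $\Lambda$ into $N$ grid points, puts i.i.d.\ $\mathrm{Ber}(p)$ values at each, and argues via the binomial CDF that with $N$ large the empirical risk dips below threshold at \emph{some} grid point while $L_{n+1}$ there is still $\mathrm{Ber}(p)$ with $p$ near $1$. Your argument is more geometric and arguably cleaner, and all the individual steps check out (right-continuity, condition~\eqref{eq:gF} with $B=1$, the feasible set being $\bigcup_{i\le n}[a_i,a_i+\delta)$, the case split on whether $a_{n+1}$ is the overall minimum, and the $O(\delta)$ bound on the bad event). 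The one substantive limitation is scope: you hard-wire $\alpha=n/(n+1)$, so your counterexample only exhibits failure of risk control at a level extremely close to $1$ (and only violates it when $\epsilon<1/(n+1)$). The paper's construction works for every $\alpha\in(1/(n+1),1)$, which is what the remark following the proposition (``for any desired level $\alpha$, the expectation can be arbitrarily close to $1$'') actually claims; the extra parameter $N$ is what buys this, by giving the empirical risk many independent chances to fall below an arbitrarily small threshold while the population risk stays near $1$. Your single-dip construction does not adapt to small $\alpha$ (for $\alpha$ well below $1$ the feasible set is typically empty), so if the intended reading of the proposition quantifies over $\alpha$, you would need to enrich the construction, e.g.\ along the lines of the paper's many-grid-point device.
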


Notice that for any desired level $\alpha$, the expectation in~\eqref{eq:intro-risk-control} can be arbitrarily close to $B$. 
Since the function values here are in $[0,B]$, this means that even for bounded random variables, risk control can be violated by an arbitrary amount---unless further assumptions are placed on the $L_i$.
However, the algorithms developed may still be appropriate for near-monotone loss functions. 
Simply `monotonizing' all loss functions $L_i$ and running conformal risk control will guarantee~\eqref{eq:intro-risk-control}, but this strategy will only be powerful if the loss is near-monotone.
For concreteness, we describe this procedure below as a corollary of Theorem~\ref{thm:upper-bound}.\looseness=-1
\begin{cor}
    Allow $L_i(\lambda)$ to be any (possibly non-monotone) function of $\lambda$ satisfying~\ref{eq:gF}. 
    Take
    \begin{equation}
        \tilde{L}_i(\lambda) = \underset{\lambda' \geq \lambda}{\sup} L_i(\lambda'), \ \ \tilde{R}_n(\lambda) = \frac{1}{n}\sum\limits_{i=1}^n \tilde{L}_i(\lambda)
        \ \ \text{and } \tilde{\lambda} = \inf\left\{ \lambda : \frac{n}{n+1} \tilde{R}_n(\lambda) + \frac{B}{n+1} \leq \alpha \right\}.
    \end{equation}
    Then,
    \begin{equation}
        \E\left[ L_{n+1}\big( \tilde{\lambda} \big) \right] \leq \alpha.
    \end{equation}
\end{cor}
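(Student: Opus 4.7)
The plan is to apply Theorem~\ref{thm:upper-bound} to the monotonized collection $\{\tilde{L}_i\}_{i=1}^{n+1}$ and then transfer the resulting guarantee back to the original $L_i$ via pointwise domination. First I would verify that $\{\tilde{L}_i\}$ satisfies all hypotheses of Theorem~\ref{thm:upper-bound}. Non-increasingness of $\tilde{L}_i(\lambda) = \sup_{\lambda' \geq \lambda} L_i(\lambda')$ is immediate, since enlarging $\lambda$ shrinks the set over which the supremum is taken. The bound $\sup_\lambda \tilde{L}_i(\lambda) \leq B$ is inherited directly from the same bound on $L_i$, and exchangeability of $\{\tilde{L}_i\}$ follows from exchangeability of $\{L_i\}$ since each $\tilde{L}_i$ is a deterministic (measurable) function of the path of $L_i$.

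Next I would check the two remaining hypotheses. For right-continuity, decompose $\tilde{L}_i(\lambda) = \max\bigl\{L_i(\lambda),\, \sup_{\lambda' > \lambda} L_i(\lambda')\bigr\}$. Since $\tilde{L}_i$ is non-increasing, $\sup_{\lambda' > \lambda} L_i(\lambda') = \lim_{\epsilon \to 0^+} \tilde{L}_i(\lambda+\epsilon)$; combining this with right-continuity of $L_i$ at $\lambda$ gives $\tilde{L}_i(\lambda) = \lim_{\epsilon \to 0^+} \tilde{L}_i(\lambda+\epsilon)$, as required. The infimum condition $\inf_\lambda \tilde{L}_i(\lambda) \leq \alpha - B/(n+1)$ follows from the analogous condition on $L_i$ interpreted at the appropriate tail of the parameter space: because $\tilde{L}_i$ is non-increasing, its infimum is attained in the limit as $\lambda$ approaches $\sup \Lambda$, and if $L_i$ is small there then so is $\tilde{L}_i$. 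This is the only place where one uses something a bit stronger than a bare infimum on $L_i$, and it is the same sort of tail regularity already baked into~\eqref{eq:gF}.

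Once these hypotheses are in place, I would observe that the threshold $\tilde\lambda$ defined in the corollary is exactly the $\lhat$ of~\eqref{eq:lhat} computed with $\tilde{L}_i$ in place of $L_i$; hence Theorem~\ref{thm:upper-bound} directly yields $\E\bigl[\tilde{L}_{n+1}(\tilde\lambda)\bigr] \leq \alpha$. The conclusion then follows from the pointwise domination $L_i(\lambda) \leq \tilde{L}_i(\lambda)$, which is immediate because $\lambda$ itself lies in $\{\lambda' \geq \lambda\}$: taking expectations at $\tilde\lambda$ gives
\begin{equation*}
    \E\bigl[L_{n+1}(\tilde\lambda)\bigr] \leq \E\bigl[\tilde{L}_{n+1}(\tilde\lambda)\bigr] \leq \alpha.
\end{equation*}
The only substantive check is the right-continuity and tail behavior of the envelope $\tilde{L}_i$; the rest is a short reduction that exploits the fact that a non-increasing upper envelope preserves everything Theorem~\ref{thm:upper-bound} needs while dominating the original loss.
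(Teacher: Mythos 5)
Your proposal is correct and is essentially the paper's own (implicit) proof: the corollary is intended as an immediate application of Theorem~\ref{thm:upper-bound} to the monotonized losses $\tilde{L}_i$, followed by the pointwise domination $L_{n+1}(\tilde{\lambda}) \leq \tilde{L}_{n+1}(\tilde{\lambda})$. You are also right to flag that the infimum condition in~\eqref{eq:gF} for $L_i$ does not by itself imply $\inf_\lambda \tilde{L}_i(\lambda) \leq \alpha - B/(n+1)$ (one really needs $\limsup_{\lambda \to \sup\Lambda} L_i(\lambda) \leq \alpha - B/(n+1)$); this is an imprecision in the corollary's statement rather than a defect of your argument.
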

If the loss function is already monotone, then $\tilde{\lambda}$ reduces to $\lhat$.
We propose a further algorithm for picking $\lambda$ in Appendix~\ref{app:monotonized} that provides an asymptotic risk-control guarantee for \emph{non-monotone} loss functions.
However, this algorithm again is only powerful when the risk $\mathbb{E}[L_{n+1}(\lambda)]$ is near-monotone and reduces to the standard conformal risk control algorithm when the loss is monotone.

\section{Examples}
\label{sec:examples}
To demonstrate the flexibility and empirical effectiveness of the proposed algorithm, we apply it to four tasks across computer vision and natural language processing. 
All four loss functions are non-binary, monotone losses bounded by $1$.
They are commonly used within their respective application domains.
Our results validate that the procedure bounds the risk as desired and gives useful outputs to the end-user.
We note that the choices of $\Clam$ used herein are \emph{only for the purposes of illustration}; any nested family of sets will work. For each example use case,  for a representative $\alpha$ (details provided for each task) we provide both qualitative results, as well as quantitative histograms of the risk and set sizes over 1000 random data splits that demonstrate valid risk control (i.e., with mean $\leq \alpha$).
Code to reproduce our examples is available at the following GitHub link:~\url{https://github.com/aangelopoulos/conformal-risk}.

\subsection{FNR control in tumor segmentation}
\begin{figure}[!htb]
    \centering
    \includegraphics[width=\textwidth]{figures/multipolyp_grid_fig.pdf}
    \includegraphics[width=\textwidth]{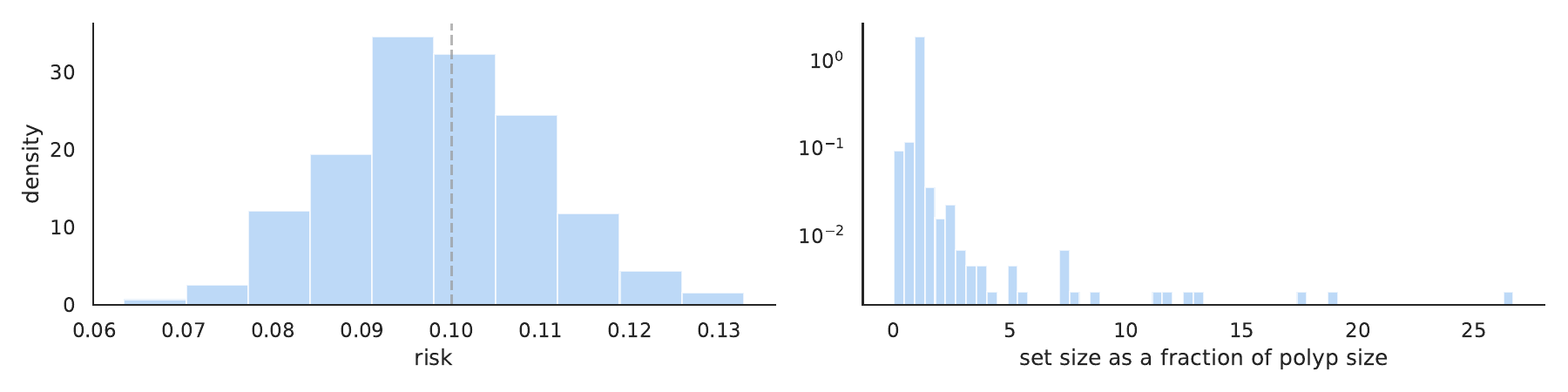}
    \caption{\textbf{FNR control in tumor segmentation}. The top figure shows examples of our procedure with correct pixels in white, false positives in blue, and false negatives in red. The bottom plots report FNR and set size over 1000 independent random data splits. The dashed gray line marks $\alpha$.}
    \label{fig:polyps}
\end{figure}

In the tumor segmentation setting, our input is a $d \times d$ image and our label is a set of pixels $Y_i \in \wp\left(\{(1,1), (1,2), ..., (d, d)\}\right)$, where $\wp$ denotes the power set.
We build on an image segmentation model $f : \X \to [0,1]^{d \times d}$ outputting a probability for each pixel and measure loss as the fraction of false negatives,
\begin{equation}
    \label{eq:fnp}
    L^{\mathrm{FNR}}_i(\lambda) = 1 - \frac{|Y_{i} \cap \Clam(X_{i})|}{ |Y_i|}, \text{ where } \Clam(X_{i}) = \left\{ y : f(X_{i})_y \geq 1-\lambda \right\}.
\end{equation}
The expected value of $L^{\mathrm{FNR}}_i$ is the FNR.
Since $L^{\mathrm{FNR}}_i$ is monotone, so is the FNR.
Thus, we use the technique in Section~\ref{sec:monotone} to pick $\lhat$ by~\eqref{eq:lhat} that controls the FNR on a new point, resulting in the following guarantee:
\begin{equation}
\label{eq:segmentation_FNR_control}
    \E\Big[L^{\mathrm{FNR}}_{n+1}(\lhat)\Big] \leq \alpha.
\end{equation}

For evaluating the proposed procedure we pool data from several online open-source gut polyp segmentation datasets: Kvasir, Hyper-Kvasir, CVC-ColonDB, CVC-ClinicDB, and ETIS-Larib. 
We choose a PraNet~\citep{fan2020pranet} as our base model $f$ and used $n=1000$, and evaluated risk control with the $781$ remaining validation data points.
We report results with $\alpha=0.1$ in Figure~\ref{fig:polyps}.
The mean and standard deviation of the risk over 1000 trials are 0.0987 and 0.0114, respectively.

\subsection{FNR control in multilabel classification}
\begin{figure}[ht]
    \centering
    \includegraphics[width=\textwidth]{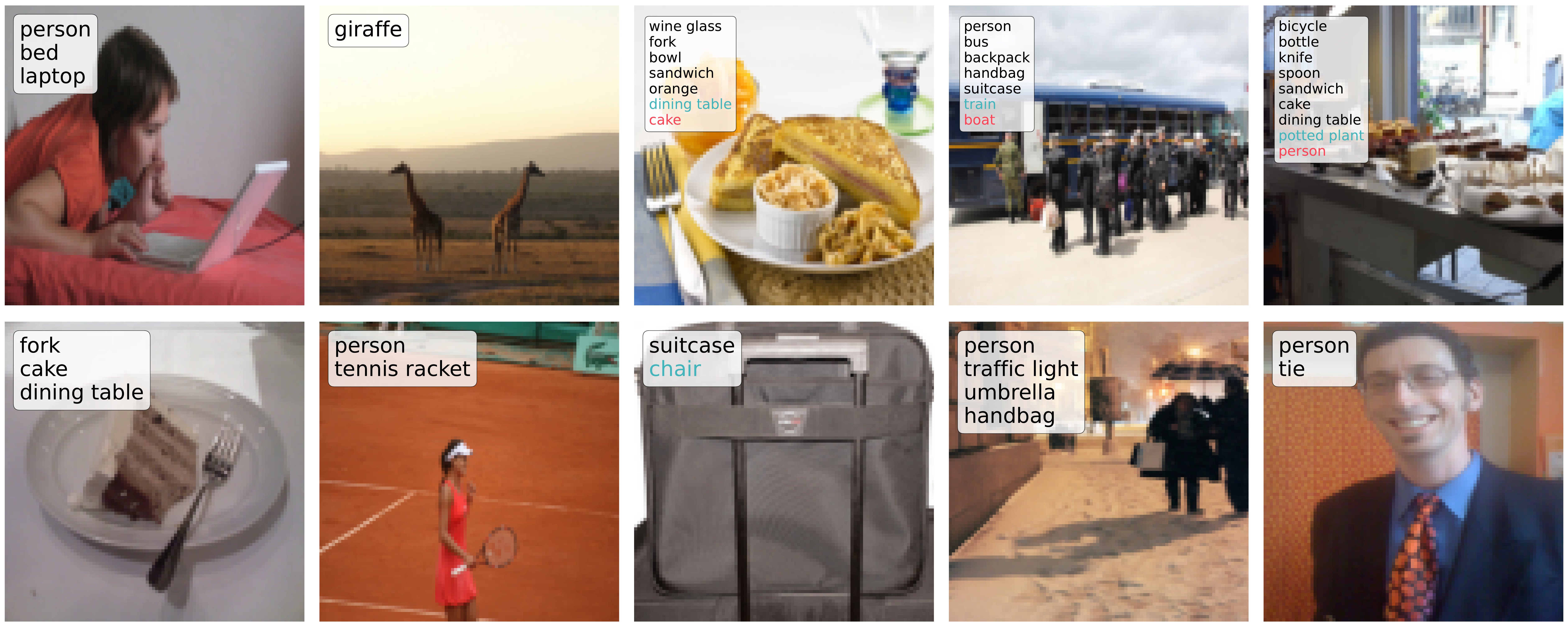}
    \includegraphics[width=\textwidth]{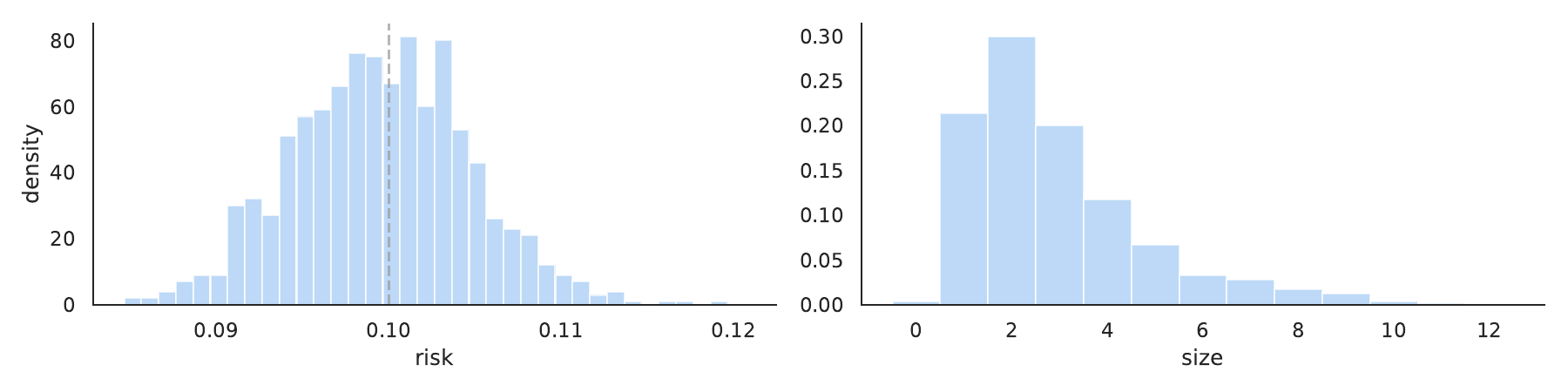}
    \caption{\textbf{FNR control on MS COCO}. The top figure shows examples of our procedure with correct classes in black, false positives in blue, and false negatives in red. The bottom plots report FNR and set size over 1000 independent random data splits. The dashed gray line marks $\alpha$.}
    \label{fig:coco}
\end{figure}
In the multilabel classification setting, our input $X_i$ is an image and our label is a set of classes $Y_i \subset \{1,\dots,K\}$ for some number of classes $K$.
Using a multiclass classification model $f : \X \to [0,1]^K$, we form prediction sets and calculate the number of false positives exactly as in~\eqref{eq:fnp}.
By Theorem~\ref{thm:upper-bound}, picking $\lhat$ as in~\eqref{eq:lhat} again yields the FNR-control guarantee in~\eqref{eq:segmentation_FNR_control}.

We use the Microsoft Common Objects in Context (MS COCO) computer vision dataset~\citep{lin2014microsoft}, a large-scale 80-class multiclass classification baseline dataset commonly used in computer vision, to evaluate the
proposed procedure. 
We choose a TResNet~\citep{ridnik2020tresnet} as our base model $f$ and used $n=4000$, and evaluated risk control with 1000 validation data points.
We report results with $\alpha=0.1$ in Figure~\ref{fig:coco}.
The mean and standard deviation of the risk over 1000 trials are 0.0996 and 0.0052, respectively.

\subsection{Control of graph distance in hierarchical image classification}
\begin{figure}[ht]
    \centering
    \includegraphics[width=\textwidth]{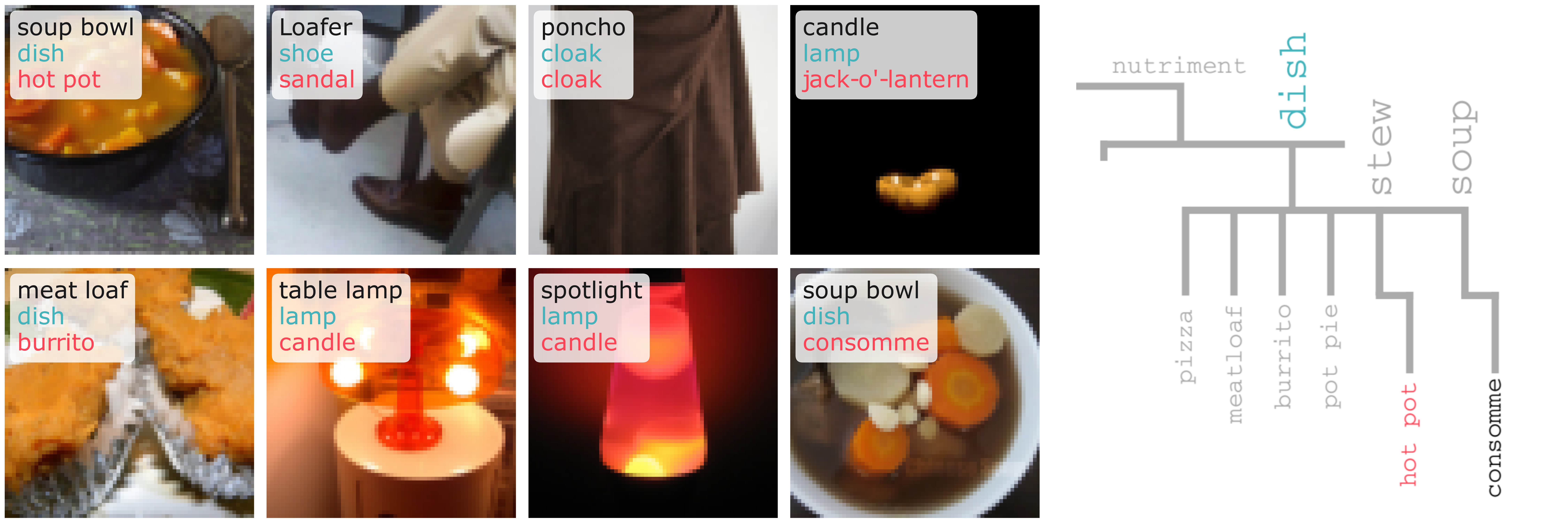}
    \includegraphics[width=\textwidth]{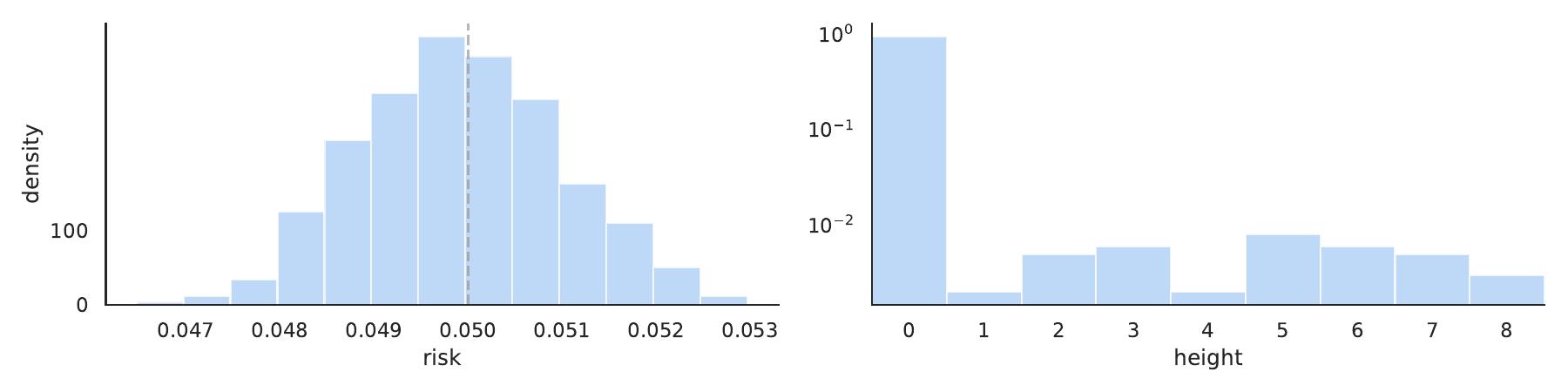}
    \caption{\textbf{Control of graph distance on hierarchical ImageNet}. The top figure shows examples of our procedure with correct classes in black, false positives in blue, and false negatives in red. The bottom plots report our minimum hierarchical distance loss and set size over 1000 independent random data splits. The dashed gray line marks $\alpha$.}
    \label{fig:hierarchical}
\end{figure}
In the $K$-class hierarchical classification setting, our input $X_i$ is an image and our label is a leaf node $Y_i \in \{1, ..., K\}$ on a tree with nodes $\mathcal{V}$ and edges $\mathcal{E}$.
Using a single-class classification model $f : \X \to \Delta^K$, we calibrate a loss in graph distance between the interior node we select and the closest ancestor of the true class.
For any $x \in \X$, let $\hat{y}(x) = \arg\max_{k} f(x)_k$ be the class with the highest estimated probability. 
Further, let $d:\mathcal{V} \times \mathcal{V} \to \mathbb{Z}$ be the function that returns the length of the shortest path between two nodes, let $\mathcal{A}: \mathcal{V} \to 2^\mathcal{V}$ be the function that returns the ancestors of its argument, and let $\mathcal{P}: \mathcal{V} \to 2^\mathcal{V}$ be the function that returns the set of leaf nodes that are descendants of its argument. 
We also let $g(v,x) = \underset{ k \in \mathcal{P}(v) }{\sum}f(x)_k$ be the sum of scores of leaves descended from $v$.
Further, define a hierarchical distance 
\begin{equation*}
d_H(v,u) = \underset{a \in \mathcal{A}(v)}{\inf} \{ d(a,u) \}.
\end{equation*}
For a set of nodes $\Clam \in 2^\mathcal{V}$, we then define the set-valued loss 
\begin{equation}
    \label{eq:hierarchical-loss}
    L_i^{\rm Graph}(\lambda) = \underset{s \in \Clam(X_i)}{\inf} \{ d_H(y,s) \} / D,\text{ where } \Clam(x) = \underset{\{a \in \mathcal{A}(\hat{y}(x)) \ : \ g(a,x) \geq -\lambda\}}{\bigcap}  \mathcal{P}(a).
\end{equation}
This loss returns zero if $y$ is a child of any element in $\Clam$, and otherwise returns the minimum distance between any element of $\Clam$ and any ancestor of $y$, scaled by the depth $D$.
Thus, it is a monotone loss function and can be controlled by choosing $\lhat$ as in~\eqref{eq:lhat} to achieve the guarantee
\begin{equation}
    \E\Big[L^{\mathrm{Graph}}_{n+1}(\lhat)\Big] \leq \alpha.
\end{equation}

For this experiment, we use the ImageNet dataset~\citep{deng2009imagenet}, which comes with an existing label hierarchy, WordNet, of maximum depth $D=14$. 
We choose a ResNet152~\citep{he2016deep} as our base model $f$ and used $n=30000$, and evaluated risk control with the remaining $20000$.
We report results with $\alpha=0.05$ in Figure~\ref{fig:hierarchical}.
The mean and standard deviation of the risk over 1000 trials are 0.0499 and 0.0011, respectively.

\subsection{F1-score control in open-domain question answering}
\begin{figure}[ht]
    \centering
    \includegraphics[width=\textwidth]{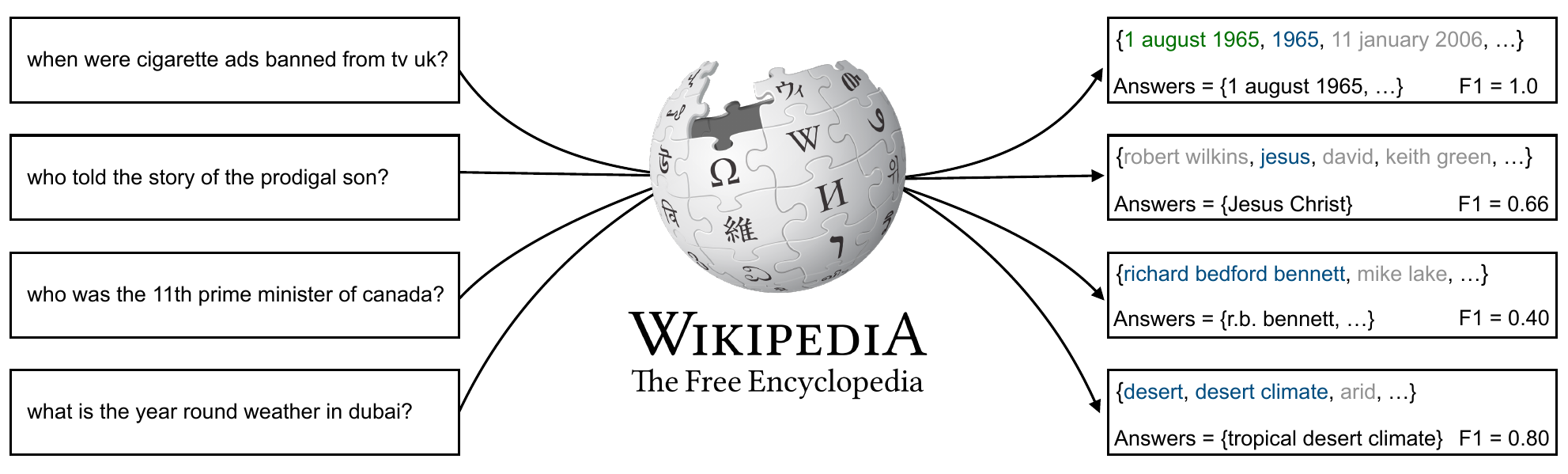}
    \includegraphics[width=\textwidth]{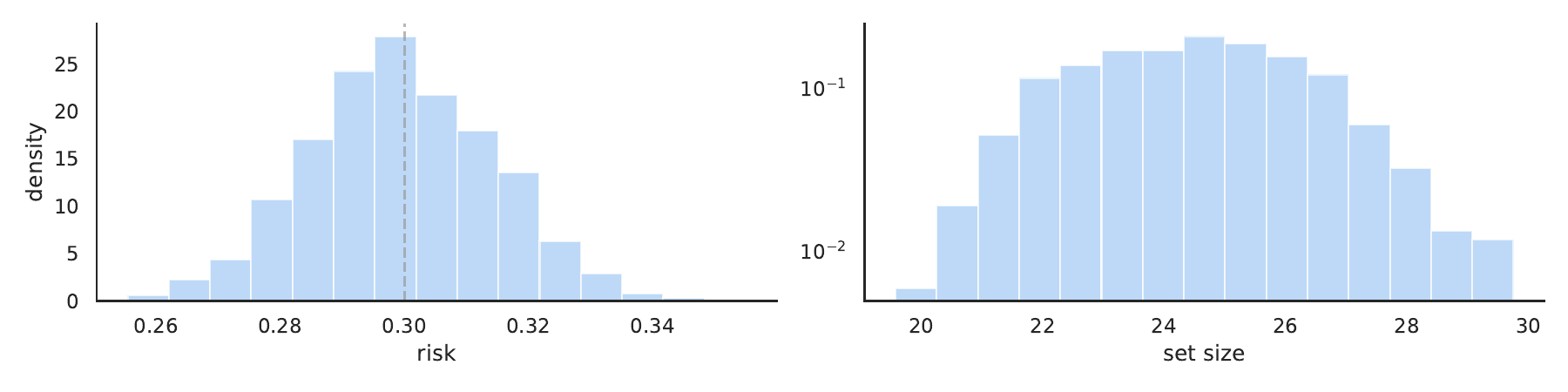}
    \caption{\textbf{F1-score control on Natural Questions}. The top figure shows examples of our procedure with fully correct answers in green, partially correct answers in blue, and false positives in gray. Note that due to the nature of the evaluation, answers that are technically correct may still be down-graded if they do not match the reference. We treat this as part of the randomness in the task. The bottom plots report the F1 risk and average set size over 1000 independent random data splits. The dashed gray line marks $\alpha$.}
    \label{fig:qa}
\end{figure}

In the open-domain question answering setting, our input $X_i$ is a question and our label $Y_i$ is a set of (possibly non-unique) correct answers. For example, the input $$X_{n+1} = \text{``Where was Barack Obama Born?''}$$ could have the answer set  $$Y_{n+1} = \{\text{``Hawaii'', ``Honolulu, Hawaii'', ``Kapo'olani Medical Center''}\}$$ Formally, here we treat all questions and answers as being composed of sequences (up to size $m$) of tokens in a vocabulary $\mathcal{V}$---i.e., assuming $k$ valid answers, we have $X_i \in \mathcal{Z}$ and $Y_i \in \mathcal{Z}^k$, where $\mathcal{Z}:= \mathcal{V}^m$. Using an open-domain question answering model that individually scores candidate output answers $f \colon \mathcal{Z} \times \mathcal{Z} \rightarrow \mathbb{R}$, we calibrate the \emph{best} token-based F1-score of the prediction set, taken over all pairs of predictions and answers:
\begin{equation}
\begin{split}
    L_i^{\mathrm{F1}}(\lambda) &= 1 - \max\big\{ \mathrm{F1}(a, c) \colon c \in \Clam(X_i), a \in Y_i\big\}, \\&\text{ where } \Clam(X_{i}) = \left\{ y \in \mathcal{V}^m: f(X_{i}, y) \geq \lambda \right\}.
\end{split}
\end{equation}
We define the F1-score following popular QA evaluation metrics~\citep{squad}, where we treat predictions and ground truth answers as bags of tokens and compute the geometric average of their precision and recall (while ignoring punctuation and articles $\{\text{``a'', ``an'', ``the''}\}$). Since $L_i^\mathrm{F1}$, as defined in this way, is monotone and upper bounded by $1$, it can  be controlled by choosing $\hat{\lambda}$ as in Section~\ref{sec:monotone} to achieve the following guarantee:\looseness=-1
\begin{equation}
    \mathbb{E}\left[L_{n+1}^{\mathrm{F1}}(\hat{\lambda})\right] \leq \alpha.
\end{equation}

We use the Natural Questions (NQ) dataset~\citep{nq}, a popular open-domain question answering baseline, to evaluate our method. We use the splits  distributed as part of the Dense Passage Retrieval (DPR) package~\citep{dpr}. Our base model is the DPR Retriever-Reader model~\citep{dpr}, which retrieves passages from Wikipedia that might contain the answer to the given query, and then uses a reader model to extract text sub-spans from the retrieved passages that serve as candidate answers. Instead of enumerating all possible answers to a given question (which is intractable), we retrieve the top several hundred candidate answers, extracted from the top 100 passages (which is sufficient to control all risks of interest). We use $n = 2500$ calibration points, and evaluate risk control with the remaining $1110$. We report results with $\alpha=0.3$ (chosen empirically as the lowest F1 score which typically results in nearly correct answers) in Figure~\ref{fig:qa}. The mean and standard deviation of the risk over 1000 trials are 0.2996 and 0.0150, respectively.\looseness=-1

\section{Extensions}
\label{sec:extensions}
In this section, we discuss several theoretical extensions of our procedure.

\subsection{Risk control under distributional shift}

Suppose the researcher wants to control the risk under a distribution shift.
Then the goal in~\eqref{eq:intro-risk-control} can be redefined as 
\begin{equation}\label{eq:weighted_objective}
   \E_{(X_1,Y_1), \ldots, (X_n, Y_n) \sim P_{\mathrm{train}}, \; (X_{n+1}, Y_{n+1})\sim P_{\mathrm{test}}}\Big[L_{n+1}\big(\lhat \big)\Big] \leq \alpha, 
\end{equation}
where $P_{\mathrm{test}}$ denotes the test distribution that is different from the training distribution $P_{\mathrm{train}}$ that $(X_i, Y_i)_{i=1}^{n}$ are sampled from. Assuming that $P_{\mathrm{test}}$ is absolutely continuous with respect to $P_{\mathrm{train}}$, the weighted objective \eqref{eq:weighted_objective} can be rewritten as 
\begin{equation}\label{eq:weighted_objective_equiv}
\begin{aligned}
\E_{(X_{1}, Y_{1}), \ldots, (X_{n+1}, Y_{n+1})\sim P_{\mathrm{train}}}\Big[w(X_{n+1}, Y_{n+1})L_{n+1}\big(\lhat \big)\Big] &\leq \alpha, \\ \text{where } w(x, y) &= \frac{dP_{\mathrm{test}}(x, y)}{dP_{\mathrm{train}}(x, y)}.
\end{aligned}
\end{equation}
When $w$ is known and bounded, we can apply our procedure on the loss function $\tilde{L}_{n+1}(\lambda) = w(X_{n+1}, Y_{n+1})L_{n+1}(\lambda)$, which is non-decreasing, bounded, and right-continuous in $\lambda$ whenever $L_{n+1}$ is. 
Thus, Theorem \ref{thm:upper-bound} guarantees that the resulting $\hat{\lambda}$ satisfies \eqref{eq:weighted_objective_equiv}.

In the setting of transductive learning, $X_{n+1}$ is available to the user. If the conditional distribution of $Y$ given $X$ remains the same in the training and test domains, the distributional shift reduces to a covariate shift and 
\[w(X_{n+1}, Y_{n+1}) = w(X_{n+1}) \triangleq \frac{dP_{\mathrm{test}}(X_{n+1})}{dP_{\mathrm{train}}(X_{n+1})}.\]
In this case, we can achieve the risk control even when $w$ is unbounded. In particular, assuming $L_i\in [0, B]$, for any potential value $x$ of the covariate, we define
\begin{equation}\label{eq:weighted_lhat}
\lhat(x) = \inf\left\{ \lambda : \frac{\sum_{i=1}^{n}w(X_i)L_i(\lambda) + w(x)B}{\sum_{i=1}^{n}w(X_i) + w(x)} \leq \alpha \right\}.
\end{equation}
When $\lambda$ does not exist, we simply set $\lhat(x) = \max\Lambda$.
It is not hard to see that $\lhat(x)\equiv \lhat$ in the absence of covariate shifts.
We can prove the following result.
\begin{prop}\label{thm:upper-bound-weighted}
  In the setting of Theorem \ref{thm:upper-bound},
  \begin{equation}
      \E_{(X_1,Y_1), \ldots, (X_n, Y_n) \sim P_{\mathrm{train}}, (X_{n+1}, Y_{n+1})\sim P_{\mathrm{test}}}[L_{n+1}(\lhat(X_{n+1}))] \leq \alpha.
  \end{equation}
\end{prop}
It is easy to show that the weighted conformal procedure \citep{tibshirani2019conformal} is a special case with $L_i(\lambda) = \ind{Y_i \not \in \mathcal{C}_\lambda(X_i)}$ where $\mathcal{C}_\lambda(X_i)$ is the prediction set that thresholds the conformity score at $\lambda$. Thus, Proposition \ref{thm:upper-bound-weighted} generalizes \cite{tibshirani2019conformal} to any monotone risk. 
When the covariate shift $w(x)$ is unknown but unlabeled data in the test domain are available, it can be estimated, up to a multiplicative factor that does not affect $\lhat(x)$, by any probabilistic classification algorithm; see \cite{lei2020conformal} and \cite{candes2023conformalized} in the context of missing and censored data, respectively. We leave the full investigation of weighted conformal risk control with an estimated covariate shift for future research.\looseness=-1

\subsubsection*{Total variation bound}
Finally, for arbitrary distribution shifts, we give a total variation bound describing the way standard (unweighted) conformal risk control degrades.
The bound is analogous to that of~\cite{barber2022conformal}for independent but non-identically distributed data (see their Section 4.1), though the proof is different.
Here we will use the notation $Z_i = (X_i, Y_i)$, and $\lhat(Z_1, \ldots, Z_n)$ to refer to that chosen in~\eqref{eq:lhat}.

\begin{prop}
  \label{thm:tv-bound}
  Let $Z = (Z_1, \ldots, Z_{n+1})$ be a sequence of random variables.
  Then, under the conditions in Theorem \ref{thm:upper-bound},
 \begin{equation*}
      \E\left[ L_{n+1}(\lhat) \right] \leq \alpha + B\sum_{i=1}^{n}\mathrm{TV}(Z_i, Z_{n+1}).
  \end{equation*}
  If further the assumptions of Theorem \ref{thm:lower-bound} hold, 
    \begin{equation*}
      \E\left[ L_{n+1}(\lhat) \right] \geq \alpha - B\left( \frac{2}{n+1} + \sum_{i=1}^{n}\mathrm{TV}(Z_i, Z_{n+1}) \right).
  \end{equation*}
\end{prop}

\subsection{Quantile risk control}
\cite{snell2022quantile} generalizes \cite{bates2021distribution} to control the quantile of a monotone loss function conditional on $(X_i, Y_i)_{i=1}^{n}$ with probability $1 - \delta$ over the calibration dataset for any user-specified tolerance parameter $\delta$. In some applications, it may be sufficient to control the unconditional quantile of the loss function, which alleviates the burden of the user to choose the tolerance parameter $\delta$. 

For any random variable $X$, let \[\mathrm{Quantile}_{\beta}(X) = \inf\{x: \P(X \le x)\ge \beta\}.\]
Analogous to \eqref{eq:intro-risk-control}, we want to find $\lhat$ based on $(X_i, Y_i)_{i=1}^{n}$ such that
\begin{equation}\label{eq:quantile_risk}
\mathrm{Quantile}_{\beta}\lb L_{n+1}(\lhat_\beta)\rb\le \alpha.
\end{equation}
By definition,
\[\mathrm{Quantile}_{\beta}\lb L_{n+1}( \lhat_\beta)\rb\le \alpha \Longleftrightarrow \E\left[\ind {L_{n+1}(\lhat_\beta) > \alpha}\right]\le 1 - \beta.\]
As a consequence, quantile risk control is equivalent to expected risk control \eqref{eq:intro-risk-control} with loss function $\tilde{L}_i(\lambda) = \ind{L_i(\lambda) > \alpha}$. Let 
\[\lhat_\beta = \inf\left\{\lambda\in \Lambda: \frac{1}{n+1}\sum_{i=1}^{n}\ind{L_i(\lambda) > \alpha} + \frac{1}{n+1}\le 1 - \beta\right\}.\]
\begin{prop}\label{thm:quantile-risk-control}
  In the setting of Theorem \ref{thm:upper-bound}, \eqref{eq:quantile_risk} is achieved.
  \end{prop}

\cite{snell2022quantile} considers the high-probability control of a wider class of quantile-based risks which include the conditional value-at-risk (CVaR). It is unclear whether those more general risks can be controlled unconditionally. We leave this open problem for future research.
  
\subsection{Controlling multiple risks}
\label{sec:multiple-risks}
Let $L_{i}(\lambda; \gamma)$ be a family of loss functions indexed by $\gamma\in \Gamma$ for some domain $\Gamma$ that may have infinitely many elements. 
A researcher may want to control $\E[L_i(\lambda; \gamma)]$ at level $\alpha(\gamma)$.
Equivalently, we need to find an $\lhat$ based on $(X_i, Y_i)_{i=1}^{n}$ such that
\begin{equation}
    \label{eq:multiple-risk-goal}
    \sup_{\gamma\in \Gamma}\E\left[\frac{L_i(\lhat; \gamma)}{\alpha(\gamma)}\right]\le 1.
\end{equation}

Though the above worst-case risk is not an expectation, it can still be controlled.
Towards this end, we define
\begin{equation}
    \label{eq:lhat-multiple-risks}
    \lhat = \sup_{\gamma \in \Gamma} \lhat_{\gamma}, \text{ where }
    \lhat_{\gamma} = \inf\left\{\lambda : \frac{1}{n+1}\sum_{i=1}^{n}L_i(\lambda; \gamma) + \frac{B}{n+1}\le \alpha(\gamma) \right\}.
\end{equation}
Then the risk is controlled.
\begin{prop}
  \label{thm:multiple-risks}
  In the setting of Theorem~\ref{thm:upper-bound},~\eqref{eq:multiple-risk-goal} is satisfied.
\end{prop}

\subsection{Adversarial risks}
We next show how to control risks defined by adversarial perturbations.
We adopt the same notation as Section~\ref{sec:multiple-risks}.
\cite{bates2021distribution} (Section 6.3) discusses the adversarial risk where $\Gamma$ parametrizes a class of perturbations of $X_{n+1}$, e.g., $L_i(\lambda; \gamma) = L(X_i + \gamma, Y_i)$ and $\Gamma = \{\gamma: \|\gamma\|_{\infty}\le \eps\}$.
A researcher may want to find an $\lhat$ based on $(X_i, Y_i)_{i=1}^{n}$ such that
\begin{equation}
    \label{eq:adversarial-goal}
    \E[ \sup_{\gamma \in \Gamma} L_i(\lambda; \gamma)] \leq \alpha.
\end{equation}

This can be recast as a conformal risk control problem by taking $\tilde{L}_i(\lambda) = \sup_{\gamma \in \Gamma} L_i(\lambda; \gamma)$.
Then, the following choice of $\lambda$ leads to risk control:
\begin{equation}
    \label{eq:lhat-adversarial}
    \lhat = \inf\left\{\lambda : \frac{1}{n+1}\sum_{i=1}^{n}\tilde{L}_i(\lambda) + \frac{B}{n+1} \le \alpha \right\}.
\end{equation}
\begin{prop}
  \label{thm:adversarial}
  In the setting of Theorem~\ref{thm:upper-bound},~\eqref{eq:adversarial-goal} is satisfied.
\end{prop}

\subsection{U-risk control}
For ranking and metric learning, \cite{bates2021distribution} considered loss functions that depend on two test points. In general, for any $k > 1$ and subset $\S\subset \{1, \ldots, n+k\}$ with $|\S| = k$, let $L_\S (\lambda)$ be a loss function. Our goal is to find $\lhat_k$ based on $(X_i, Y_i)_{i=1}^{n}$ such that 
\begin{equation}\label{eq:U-risk}
\E\left[L_{\{n+1, \ldots, n+k\}}(\lhat_k)\right]\le \alpha.    
\end{equation}
We call the LHS a U-risk since, for any fixed $\lhat_k$, it is the expectation of an order-$k$ U-statistic. As a natural extension, we can define 
\begin{equation}\label{eq:lhatk}
\lhat_k = \inf\left\{\lambda: \frac{k!n!}{(n+k)!}\sum_{\S\subset \{1, \ldots, n\}, |\S| = k}L_{\S}(\lambda) + B \lb 1 - \frac{(n!)^2}{(n+k)!(n-k)!}\rb\le \alpha\right\}.    
\end{equation}
Again, we define $\lhat_k = \lambda_{\max}$ when the right-hand side is an empty set. Then we can prove the following result. 
\begin{prop}\label{thm:U_risk_control}
  Assume that $L_{\S}(\lambda)$ is non-increasing in $\lambda$, right-continuous, and
  \begin{equation*}
    L_{\S}(\lambda_{\max}) \le \alpha, \quad \sup_{\lambda}L_{\S}(\lambda) \le B < \infty \text{ almost surely}. 
  \end{equation*}
  Then \eqref{eq:U-risk} is achieved.
\end{prop}

\section{Conclusion}
This generalization of conformal prediction broadens its scope to new applications, as shown in Section~\ref{sec:examples}.
The mathematical tools developed in Section~\ref{sec:theory}, Section~\ref{sec:extensions}, and the Appendix may be of independent technical interest, since they provide a new and more general language for studying conformal prediction along with new results about its validity. 

\section*{Acknowledgements}
The authors thank Christopher Yeh for pointing out that the factor of 2 in Theorem~\ref{thm:lower-bound} may not be tight, correcting an error in an earlier draft. 
The incorrect proposition stating that it was tight has been removed; the tightness of this factor remains an open question.
The authors would like to thank Nicolas Christianson, Amit Kohli, Sherrie Wang, and Tijana Zrni\'c for comments on early drafts.
A.~A.~would like to thank Ziheng (Tony) Wang for helpful conversations.
A.~A.~is funded by the NSF GRFP and a Berkeley Fellowship.
S.~B.~is supported by the NSF FODSI fellowship and the Simons institute.
A.~F. is partially funded by the NSF GRFP and MIT MLPDS.\looseness=-1

\printbibliography

\appendix

\section{Monotonizing non-monotone risks}
\label{app:monotonized}
We next show that the proposed algorithm leads to asymptotic risk control for non-monotone risk functions when applied to a monotonized version of the empirical risk.
We set the \emph{monotonized empirical risk} to be
\begin{equation}
  \Rhatplus_{n}(\lambda) = \underset{t \geq \lambda}{\sup}\;\Rhat_{n}(t),
\end{equation}
then define
\begin{equation}
    \label{eq:lhatplus}
    \lhatplus_n = \inf\left\{\lambda:  \Rhatplus_{n}(\lambda) \le \alpha \right\}.
\end{equation}

\begin{theorem}
  \label{thm:monotonized}
  Let the $L_{i}(\lambda)$ be right-continuous, i.i.d., bounded (both above and below) functions satisfying~\eqref{eq:gF}.
  Then,
  \begin{equation}
    \underset{n \to \infty}{\lim}\E\Big[L_{n+1}\big(\lhatplus_n\big)\Big]\le \alpha.
  \end{equation}
\end{theorem}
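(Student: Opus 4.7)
The plan is to lift the finite-sample argument to the population level. Let $R(\lambda) = \E[L_1(\lambda)]$ denote the true risk and let $R^{\uparrow}(\lambda) = \sup_{t \geq \lambda} R(t)$ denote its monotonized version, which is non-increasing. Right-continuity of $L_1$ together with dominated convergence makes $R$ right-continuous, and a short computation shows $R^{\uparrow}$ inherits right-continuity as well. Under~\eqref{eq:gF}, the population threshold $\lambda^{*} = \inf\{\lambda : R^{\uparrow}(\lambda) \leq \alpha\}$ is well-defined. I would then show (i)~$\Rhatplus_{n}$ converges uniformly in probability to $R^{\uparrow}$, so that the data-driven $\lhatplus_n$ eventually places the monotonized \emph{population} risk near $\alpha$, and (ii)~independence plus bounded convergence upgrades this to the claim about $\E[L_{n+1}(\lhatplus_n)]$.

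\textbf{Reduction to a uniform LLN.}
The key deterministic step is
\[
\sup_{\lambda} \bigl| \Rhatplus_{n}(\lambda) - R^{\uparrow}(\lambda) \bigr|
= \sup_{\lambda} \Bigl| \sup_{t \geq \lambda} \Rhat_{n}(t) - \sup_{t \geq \lambda} R(t) \Bigr|
\leq \sup_{t} \bigl| \Rhat_{n}(t) - R(t) \bigr|,
\]
so the whole argument reduces to the uniform law $\sup_{t}|\Rhat_{n}(t) - R(t)| \to 0$ in probability. Since each $L_i$ is right-continuous, the sup over $\lambda$ may be replaced by a sup over a countable dense $Q\subset\Lambda$, after which a Glivenko--Cantelli/bracketing bound for uniformly bounded function classes delivers the uniform LLN.

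\textbf{Concluding argument.}
Right-continuity of $\Rhat_n$ propagates to $\Rhatplus_n$, so the infimum in~\eqref{eq:lhatplus} is attained and $\Rhatplus_n(\lhatplus_n) \leq \alpha$. Combining with the reduction above,
\[
R(\lhatplus_n) \leq R^{\uparrow}(\lhatplus_n)
\leq \Rhatplus_n(\lhatplus_n) + \sup_{\lambda}\bigl|\Rhatplus_n(\lambda) - R^{\uparrow}(\lambda)\bigr|
\leq \alpha + \delta_n,
\]
where $\delta_n \to 0$ in probability. Because $\lhatplus_n$ is a function of $L_1,\dots,L_n$ alone and $L_{n+1}$ is independent of these, the tower property gives $\E[L_{n+1}(\lhatplus_n)] = \E[R(\lhatplus_n)] \leq \alpha + \E[\delta_n]$. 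Uniform boundedness of the $L_i$ makes $\delta_n$ bounded, so bounded convergence yields $\limsup_n \E[L_{n+1}(\lhatplus_n)] \leq \alpha$, which is the desired conclusion.

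\textbf{Main obstacle.}
The delicate part is the uniform LLN for $\Rhat_n$; right-continuity and boundedness of a single random function do not automatically make $\{L_1(\lambda) : \lambda \in \Lambda\}$ Glivenko--Cantelli. I would handle this by either (a)~assuming or reducing to a compact $\Lambda$ and using right-continuity plus a one-sided $\epsilon$-net argument to dominate the supremum by finitely many pointwise LLNs, or (b)~invoking a bracketing-number bound that exploits the uniform envelope. An unbounded $\Lambda$ would likely require a preliminary truncation step guaranteeing that neither $\lhatplus_n$ nor $\lambda^{*}$ escape to the tail.
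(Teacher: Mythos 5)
Your proposal follows essentially the same route as the paper's proof: monotonize the population risk, reduce $\sup_\lambda|\Rhatplus_n-R^{\uparrow}|$ to the uniform LLN for $\Rhat_n$, use $\Rhatplus_n(\lhatplus_n)\le\alpha$ plus independence of $L_{n+1}$ from $\lhatplus_n$, and pass to the limit by bounded convergence (the paper uses reverse Fatou on $\limsup_n R^{\uparrow}(\lhatplus_n)\le\alpha$, a cosmetic difference). The uniform LLN step you flag as the main obstacle is exactly the one the paper discharges by citing DeHardt's bracketing-based generalization of Glivenko--Cantelli, which matches your proposed resolution (b).
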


Theorem~\ref{thm:monotonized} implies that an analogous procedure to~\ref{eq:lhat} also controls the risk asymptotically.
In particular, taking
\begin{equation}
    \tilde{\lambda}^{\uparrow} = \inf\left\{\lambda:  \Rhatplus_{n}(\lambda) + \frac{B}{n+1} \le \alpha \right\}
\end{equation}
also results in asymptotic risk control (to see this, plug $\tilde{\lambda}^{\uparrow}$ into Theorem~\ref{thm:monotonized} and see that the  risk level is bounded above by $\alpha-\frac{B}{n+1}$).
Note that in the case of a monotone loss function, $\tilde{\lambda}^{\uparrow} = \lhat$.
However, the counterexample in Proposition~\ref{prop:counterexample} does not apply to $\tilde{\lambda}^{\uparrow}$, and it is currently unknown whether this procedure does or does not provide finite-sample risk control.

\section{Proofs}
\label{app:proofs}

The proof of Theorem~\ref{thm:lower-bound} uses the following lemma on the approximate continuity of the empirical risk.
\begin{lemma}[Jump Lemma]
  \label{lem:jump}
  In the setting of Theorem~\ref{thm:lower-bound}, any jumps in the empirical risk are bounded, i.e.,
  \begin{equation}
     \sup_{\lambda}J\big(\Rhat_{n}, \lambda\big) \overset{a.s.}{\leq} \frac{B}{n}.
  \end{equation}
\end{lemma}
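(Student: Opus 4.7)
My plan is to reduce the problem to showing that, almost surely, the discontinuity sets of the individual $L_i$ are pairwise disjoint, and then to exploit the fact that each single $L_i$ can jump by at most $B$.

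First, since the one-sided limit defining $J$ distributes over a finite sum, for every $\lambda$,
\begin{equation}
J(\Rhat_{n}, \lambda) = \frac{1}{n}\sum_{i=1}^{n} J(L_i, \lambda),
\end{equation}
and each summand lies in $[0,B]$ because $L_i$ is non-increasing with values in $[0,B]$ (non-negativity is part of the hypothesis of Theorem~\ref{thm:lower-bound}). This already delivers the trivial bound $J(\Rhat_{n}, \lambda) \leq B$, so the whole point is to show that almost surely, at each $\lambda$, at most one summand is nonzero.

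Next, let $D_i := \{\lambda : J(L_i, \lambda) > 0\}$. Since $L_i$ is bounded and monotone, $D_i$ is (deterministically) at most countable. I will show that $\P(D_i \cap D_j \neq \emptyset) = 0$ for $i \neq j$. Condition on $L_j$; then $D_j$ becomes a fixed countable set. By independence of $L_i$ and $L_j$, together with the hypothesis that $\P(J(L_i, \lambda) > 0) = 0$ for every deterministic $\lambda$, a countable union bound gives
\begin{equation}
\P(D_i \cap D_j \neq \emptyset \mid L_j) \;\leq\; \sum_{\lambda \in D_j} \P(J(L_i, \lambda) > 0) \;=\; 0.
\end{equation}
Taking expectations, and then union-bounding over the $\binom{n}{2}$ index pairs, shows that $D_1, \ldots, D_n$ are almost surely pairwise disjoint. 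On that event, for every $\lambda$ at most one term in the sum is nonzero, so $J(\Rhat_{n}, \lambda) \leq B/n$, and taking the supremum over $\lambda$ completes the proof.

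The main obstacle is the pairwise-disjointness step: the hypothesis only rules out jumps at each \emph{pre-specified} $\lambda$, whereas I need a statement that holds uniformly over the random countable set $D_j$. The trick is to use independence and the countability of $D_j$ (a Fubini/conditioning argument) to convert ``fixed $\lambda$, random jump of $L_i$'' into ``random $\lambda$ ranging over $D_j$, fixed $L_i$''; this swap is what makes the per-$\lambda$ null-set assumption strong enough to yield a uniform conclusion and is the only place in the argument where continuity of the $L_i$ distribution enters.
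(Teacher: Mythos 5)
Your proof is correct and follows essentially the same route as the paper's: bound each individual jump by $B/n$, note that each $D_i$ is countable by monotonicity, and use independence plus a conditioning/union-bound argument over the countable random set $D_j$ to show the discontinuity sets are almost surely pairwise disjoint. The only cosmetic difference is that you make the identity $J(\Rhat_n,\lambda)=\frac{1}{n}\sum_i J(L_i,\lambda)$ explicit, which the paper leaves implicit.
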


\begin{proof}[Proof of Jump Lemma, Lemma~\ref{lem:jump}]
  By boundedness, the maximum contribution of any single point to the jump is $\frac{B}{n}$, so
  \begin{equation}
    \exists \lambda :\; J\big(\Rhat_n, \lambda \big) > \frac{B}{n} \\ \Longrightarrow \exists \lambda :\; J(L_i,\lambda) > 0 \text{ and } J(L_j,\lambda) > 0 \text{ for some } i \neq j.
  \end{equation}
      Call $\D_i = \{ \lambda : J(L_i, \lambda) > 0 \}$ the sets of discontinuities in  $L_i$. Since $L_i$ is bounded monotone, $\D_i$ has countably many points. The union bound then implies that
    \begin{equation}
        \P\left(\exists \lambda : \;  J(\Rhat_n, \lambda) > \frac{B}{n} \right) \le \sum_{i\neq j}\P(\D_i \cap \D_j \neq \emptyset)
    \end{equation}
  Rewriting each term of the right-hand side using tower property and law of total probability gives
  \begin{align}
      \P\left( \D_i \cap \D_j \neq \emptyset \right) 
      &= \E\left[ \P\big( \D_i \cap \D_j \neq \emptyset \: \big| \: \D_j \big) \right] \\ 
      &\leq \E\left[ \sum\limits_{\lambda \in \D_j} \P\left( \lambda \in \D_i \; \Big| \; \D_j \right) \right] = \E\left[ \sum\limits_{\lambda \in \D_j} \P\left( \lambda \in \D_i\right) \right],
  \end{align}
  Where the second inequality is because the union of the events $\lambda \in \D_j$ is the entire sample space, but they are not disjoint, and the third equality is due to the independence between $\D_i$ and $\D_j$. Rewriting in terms of the jump function and applying the assumption $\P \left( J(L_i, \lambda) > 0 \right) = 0$,
  \begin{equation}
      \E\left[ \sum\limits_{\lambda \in \D_j} 
      \P \left( \lambda \in \D_i\right) \right] = \E\left[ \sum\limits_{\lambda \in \D_j} 
      \P \left( J(L_i, \lambda) > 0 \right) \right] = 0.
  \end{equation}
  Chaining the above inequalities yields $\P\left(\exists \lambda :   J(\Rhat_n, \lambda) > \frac{B}{n} \right) \leq 0$, so \\ $\P\left(\exists \lambda :   J(\Rhat_n, \lambda) > \frac{B}{n} \right) = 0$.
\end{proof}

\begin{proof}[Proof of Theorem~\ref{thm:lower-bound}]
  If $L_i(\lambda_{\max})\ge \alpha - 2B/(n+1)$, then $\E[L_{n+1}(\hat{\lambda})]\ge \alpha - 2B/(n+1)$. Throughout the rest of the proof, we assume that $L_i(\lambda_{\max}) < \alpha - 2B/(n+1)$.
  Define the quantity
  \begin{equation}
    \lhat'' = \inf\left\{ \lambda : \Rhat_{n+1}(\lambda) + \frac{B}{n+1} \leq \alpha \right\}.
  \end{equation}
  Since $L_i(\lambda_{\max}) < \alpha - 2B/(n+1) < \alpha - B/ (n+1)$, $\lhat''$ exists almost surely. Deterministically, $\frac{n}{n+1} \Rhat_n(\lambda) \leq \Rhat_{n+1}(\lambda)$, which yields $\lhat \leq \lhat''$.
  Again since $L_i(\lambda)$ is non-increasing in $\lambda$,
  \begin{equation}
    \E\left[ L_{n+1}\big(\lhat''\big) \right] \leq \E\left[ L_{n+1}\big(\lhat\big) \right]
  \end{equation}
  By exchangeability and the fact that $\lhat''$ is a symmetric function of $L_1, \ldots, L_{n+1}$,
  \begin{equation}
    \E\left[ L_{n+1}\big(\lhat''\big) \right] = \E\left[ \Rhat_{n+1}\big(\lhat''\big) \right]
  \end{equation}
   
  For the remainder of the proof we focus on lower-bounding $\Rhat_{n+1}\big(\lhat''\big)$. 
  We begin with the following identity: 
  \begin{equation}
    \alpha = \Rhat_{n+1}\big(\lhat''\big) + \frac{B}{n+1} - \Big(\Rhat_{n+1}\big(\lhat''\big) + \frac{B}{n+1} -\alpha\Big).
  \end{equation}
  Rearranging the identity,
  \begin{equation}
    \Rhat_{n+1}\big(\lhat''\big) = \alpha - \frac{B}{n+1} + \Big(\Rhat_{n+1}\big(\lhat''\big) + \frac{B}{n+1} -\alpha\Big).
  \end{equation}
  Using the Jump Lemma to bound $\Big(\Rhat_{n+1}\big(\lhat''\big) + \frac{B}{n+1} -\alpha\Big)$ below by $-\frac{B}{n+1}$ gives 
  \begin{equation}
    \Rhat_{n+1}\big(\lhat''\big) \geq \alpha - \frac{2B}{n+1}.
  \end{equation}
  Finally, chaining together the above inequalities, 
  \begin{equation}
    \E\bigg[ L_{n+1}(\lhat) \bigg] \geq \E\bigg[ \Rhat_{n+1}(\lhat'') \bigg] \geq \alpha - \frac{2B}{n+1}.
  \end{equation}

\end{proof}

\begin{proof}[Proof of Proposition~\ref{prop:counterexample}]
  Without loss of generality, we assume $B=1$.
  Assume $\lhat$ takes values in $[0,1]$ and $\alpha \in (1/(n+1), 1)$. Let $p\in (0, 1)$, $N$ be any positive integer, and $L_{i}(\lambda)$ be i.i.d. right-continuous piecewise constant (random) functions with
\begin{equation}
  L_{i}(N/N) = 0, \quad \left(L_{i}(0/N), L_{i}(1/N), \ldots, L_{i}((N - 1)/N)\right)\stackrel{i.i.d.}{\sim}\text{Ber}(p).
\end{equation}
By definition, $\lhat$ is independent of $L_{n+1}$. Thus, for any $j = 0, 1, \ldots, N-1$,
\begin{equation}
  \left\{L_{n+1}(\lhat)\mid \lhat = j/N\right\} \sim \text{Ber}(p), \quad \left\{L_{n+1}(\lhat)\mid \lhat = 1\right\} \sim \delta_{0}.
\end{equation}
Then,
\begin{equation}
  \E\Big[ L_{n+1}\big(\lhat\big) \Big] = p\cdot\P(\lhat \neq 1)\\
\end{equation}
Note that 
\begin{equation}
  \lhat \neq 1 \Longleftrightarrow \min_{j \in \{0, \ldots, N-1\}}\frac{1}{n+1}\sum_{i=1}^{n}L_{i}(j / N) \le \alpha - \frac{1}{n+1}.
\end{equation}
Since $\alpha > 1 / (n + 1)$,
\begin{align*}
  \P(\lhat \neq 1) = 1 - \P(\lhat = 1) &= 1-\P\left(\text{for all }j, \text{ we have }\frac{1}{n+1}\sum_{i=1}^{n}L_{i}(j / N) > \alpha - \frac{1}{n+1} \right)\\
  & = 1 - \left(\sum\limits_{k=\lceil(n+1)\alpha\rceil}^n {n\choose k} p^k(1-p)^{(n-k)}\right)^N\\
  & = 1 - \left(1-\mathrm{BinoCDF}\big(n,p,\lceil(n+1)\alpha\rceil-1\big)\right)^N\\
\end{align*}
As a result, 
\begin{equation}
  \E\Big[ L_{n+1}\big(\lhat\big) \Big] = p \Bigg(1 - \left(1-\mathrm{BinoCDF}\big(n,p,\lceil(n+1)\alpha\rceil-1\big)\right)^N\Bigg).
\end{equation}
Now let $N$ be sufficiently large such that 
\begin{equation}
  \Bigg(1 - \left(1-\mathrm{BinoCDF}\big(n,p,\lceil(n+1)\alpha\rceil-1\big)\right)^N\Bigg) > p.
\end{equation}
Then
\begin{equation}
  \E\Big[ L_{n+1}\big(\lhat\big) \Big] > p^2
\end{equation}
For any $\alpha > 0$, we can take $p$ close enough to $1$ to render the claim false.
\end{proof}

\begin{proof}[Proof of Theorem~\ref{thm:monotonized}]
  Define the \emph{monotonized population risk} as 
  \begin{equation}
    \Rplus(\lambda) = \underset{t \geq \lambda}{\sup}\; \E\Big[ L_{n+1}(t) \Big]
  \end{equation}
  Note that the independence of $L_{n+1}$ and $\lhatplus_n$ implies that for all $n$,
  \begin{equation}
    \E\Big[ L_{n+1}\big( \lhatplus_n \big) \Big] \le \E\Big[ \Rplus\big( \lhatplus_n \big) \Big].
  \end{equation}
  Since $\Rplus$ is bounded, monotone, and one-dimensional, a generalization of the Glivenko-Cantelli Theorem given in Theorem 1 of~\cite{dehardt1971generalizations} gives that uniformly over $\lambda$, 
    \begin{equation}
      \underset{n \to \infty}{\lim} \sup_{\lambda}|\Rhat_{n}(\lambda) - R(\lambda)| \overset{a.s.}{\to} 0 .
  \end{equation}
  As a result,
  \begin{equation}
      \underset{n \to \infty}{\lim} \sup_{\lambda}|\Rhatplus_{n}(\lambda) - \Rplus(\lambda)| \overset{a.s.}{\to} 0,
  \end{equation}
  which implies that 
  \begin{equation}
      \underset{n \to \infty}{\lim} |\Rhatplus_{n}(\lhatplus) - \Rplus(\lhatplus)| \overset{a.s.}{\to} 0.
  \end{equation}
  By definition, $\Rhatplus(\lhatplus)\le \alpha$ almost surely and thus this directly implies
  \begin{equation}
      \underset{n \to \infty}{\limsup}~  \Rplus\big(\lhatplus_n\big) \leq \alpha\quad \text{a.s.}.
  \end{equation}
  Finally, since for all $n$, $\Rplus\big(\lhatplus_n\big) \leq B$, by Fatou's lemma,
  \begin{equation}
      \underset{n \to \infty}{\lim}\E\Big[ L_{n+1}\big( \lhatplus_n \big) \Big] \le \underset{n \to \infty}{\limsup}~\E\Big[ \Rplus\big( \lhatplus_n \big) \Big] \le \E\Big[ \underset{n \to \infty}{\limsup}~ \Rplus\big( \lhatplus_n \big) \Big] \leq \alpha.
  \end{equation}
\end{proof}

\begin{proof}[Proposition~\ref{thm:upper-bound-weighted}]
  Let 
  \begin{equation}
    \lhat' = \inf\left\{ \lambda : \frac{\sum_{i=1}^{n+1}w(X_i)L_i(\lambda)}{\sum_{i=1}^{n+1}w(X_i)} \leq \alpha \right\}.
  \end{equation}
   Since $\inf_{\lambda} L_i(\lambda) \le \alpha$, $\lhat'$ exists almost surely. Using the same argument as in the proof of Theorem \ref{thm:upper-bound}, we can show that $\lhat'\le \lhat(X_{n+1})$. Since $L_{n+1}(\lambda)$ is non-increasing in $\lambda$, 
  \[\E[L_{n+1}(\lhat(X_{n+1}))]\le \E[L_{n+1}(\lhat')].\]
  Let $E$ be the multiset of loss functions $\{ (X_1, Y_1), \ldots, (X_{n+1}, Y_{n+1}) \}$. Then $\lhat'$ is a function of $E$, or, equivalently, $\lhat'$ is a constant conditional on $E$. Lemma 3 of \cite{tibshirani2019conformal} implies that 
  \[(X_{n+1}, Y_{n+1})\mid E\sim \sum_{i=1}^{n+1}\frac{w(X_{i})}{\sum_{j=1}^{n+1}w(X_j)}\delta_{(X_j, Y_j)}\Longrightarrow L_{n+1} \mid E \sim \sum_{i=1}^{n+1}\frac{w(X_{i})}{\sum_{j=1}^{n+1}w(X_j)}\delta_{L_{i}}\]
 where $\delta_{z}$ denotes the Dirac measure at $z$. Together with the right-continuity of $L_{i}$, the above result implies
 \begin{equation}
    \E\left[L_{n+1}(\lhat')\mid E\right] = \frac{\sum_{i=1}^{n+1}w(X_i)L_{i}(\lhat')}{\sum_{i=1}^{n+1}w(X_i)} \leq \alpha.
 \end{equation}
 The proof is then completed by the law of total expectation.
\end{proof}

\begin{proof}[Proposition~\ref{thm:tv-bound}]
  Define the vector $Z'=(Z_1', \ldots, Z_n', Z_{n+1})$, where $Z_i' \overset{i.i.d.}{\sim} \mathcal{L}(Z_{n+1})$ for all $i \in [n]$. 
  Let 
  \[\eps = \sum_{i=1}^{n}\mathrm{TV}(Z_i, Z_i').\]
  By sublinearity,
  \begin{equation}
    \label{eq:tv-bounded-vector}
    \mathrm{TV}(Z, Z') \leq \epsilon.
  \end{equation}
  It is a standard fact that~\eqref{eq:tv-bounded-vector} implies
  \begin{equation}
      \underset{f \in \mathcal{F}_{\mathbbm{1}}}{\sup} \left| \E[ f(Z) ] - \E[ f(Z') ] \right| \leq \epsilon,
  \end{equation}
where $\mathcal{F}_{\mathbbm{1}} = \{f: \mathcal{Z}\mapsto [0, 1]\}$.
  Let $\ell : \mathcal{Z} \times \Lambda \to [0,B]$ be a bounded loss function.
  Furthermore, let $g(z) = \ell(z_{n+1}; \lhat(z_1, \ldots, z_n))$.
  Since $g(Z) \in [0, B]$, 
  \begin{equation}
      | \E[g(Z)] - \E[g(Z')] | \leq B\epsilon.
  \end{equation}
  Furthermore, since $Z_1', \ldots, Z_{n+1}'$ are exchangeable, we can apply Theorems~\ref{thm:upper-bound} and~\ref{thm:lower-bound} to $\E[g(Z')]$, recovering
  \begin{equation}
      \alpha - \frac{2B}{n+1} \leq \E[g(Z')] \leq \alpha.
  \end{equation}
  A final step of triangle inequality implies the result:
  \begin{equation}
      \alpha - \frac{2B}{n+1} - B\epsilon \leq \E[g(Z)] \leq \alpha + B\epsilon.
  \end{equation}
\end{proof}

\begin{proof}[Proposition~\ref{thm:quantile-risk-control}]
    It is left to prove that $\tilde{L}_i(\lambda)$ satisfies the conditions of Theorem \ref{thm:upper-bound}. It is clear that  $\tilde{L}_i(\lambda)\le 1$ and $\tilde{L}_i(\lambda)$ is non-increasing in $\lambda$ when $L_i(\lambda)$ is. Since $L_i(\lambda)$ is non-increasing and right-continuous, for any sequence $\lambda_{m}\downarrow \lambda$, 
    \[L_{i}(\lambda_{m})\uparrow L_i(\lambda)\Longrightarrow \ind{L_{i}(\lambda_{m}) > \alpha} \rightarrow \ind{L_{i}(\lambda) > \alpha}.\]
    Thus, $\tilde{L}_i(\lambda)$ is right-continuous. Finally, $L_i(\lambda_{\max}) \le \alpha$ implies  $\tilde{L}_i(\lambda_{\max}) = 0 \le 1 - \beta$. 
\end{proof}

\begin{proof}[Proposition~\ref{thm:multiple-risks}]
Examining~\eqref{eq:lhat-multiple-risks}, for each $\gamma \in \Gamma$, we have
\begin{equation}
    \E\left[L(\lhat, \gamma)\right] \leq \E\left[L(\lhat_{\gamma}, \gamma)\right] \leq \alpha(\gamma).
\end{equation}
Thus, dividing both sides by $\alpha(\gamma)$ and taking the supremum, we get that $\sup_{\gamma \in \Gamma} \E\left[\frac{L(\lhat, \gamma)}{\alpha(\gamma)}\right] \leq 1$, and the worst-case risk is controlled.
\end{proof}

\begin{proof}[Proposition~\ref{thm:adversarial}]
    Because $L_i(\lambda, \gamma)$ is bounded and monotone in $\lambda$ for all choices of $\gamma$, it is also true that $\tilde{L_i}(\lambda)$ is bounded and monotone.
    Furthermore, the pointwise supremum of right-continuous functions is also right-continuous.
    Therefore, the $\tilde{L_i}$ satisfy the assumptions of Theorem~\ref{thm:upper-bound}.
\end{proof}

\begin{proof}[Proposition~\ref{thm:U_risk_control}]
  Let 
  \[\lhat'_k = \inf\left\{\lambda: \frac{k!n!}{(n+k)!}\sum_{\S\subset \{1, \ldots, n+k\}, |\S| = k}L_{\S}(\lambda)\le \alpha\right\}.\]
  Since $L_\S(\lambda_{\max})\le \alpha$, $\lhat'_k$ exists almost surely. Since $L_{\S}(\lambda)\le B$, we have 
  \begin{align*}
    &\frac{k!n!}{(n+k)!}\sum_{\S\subset \{1, \ldots, n+k\}, |\S| = k}L_{\S}(\lambda)\\
    & \le \frac{k!n!}{(n+k)!}\sum_{\S\subset \{1, \ldots, n\}, |\S| = k}L_{\S}(\lambda) + B \cdot \sum_{\S\cap \{n+1, \ldots, n+k\}\neq \emptyset, |\S|=k}1\\
    & =   \frac{k!n!}{(n+k)!}\sum_{\S\subset \{1, \ldots, n\}, |\S| = k}L_{\S}(\lambda) + B \lb 1 - \frac{k!n!}{(n+k)!}\sum_{\S\subset \{1, \ldots, n\}, |\S| = k}1\rb\\
    & = \frac{k!n!}{(n+k)!}\sum_{\S\subset \{1, \ldots, n\}, |\S| = k}L_{\S}(\lambda) + B \lb 1 - \frac{(n!)^2}{(n+k)!(n-k)!}\rb.
  \end{align*}
  Since $L_{\S}(\lambda)$ is non-increasing in $\lambda$, we conclude that $\lhat'_k \le \lhat_k$ if the right-hand side of \eqref{eq:lhatk} is not empty; otherwise, by definition, $\lhat'_k \le \lambda_{\max} = \lhat_k$. Thus, $\lhat'_k \le \lhat_k$ almost surely. Let $E$ be the multiset of loss functions $\{L_{\S}: \S\subset \{1, \ldots, n+k\}, |\S| = k\}$. Using the same argument in the end of the proof of Theorem \ref{thm:upper-bound} and the right-continuity of $L_{\S}$, we can show that
  \[\E\left[L_{\{n+1, \ldots, n+k\}}(\lhat'_k)\mid E\right] = \frac{k!n!}{(n+k)!}\sum_{\S\subset \{1, \ldots, n+k\}, |\S| = k}L_{\S}(\lambda)\le \alpha.\]
  The proof is then completed by the law of iterated expectation.
  \end{proof}

\end{document}